\newtheorem{theorem}{Theorem}\theoremstyle{plain}
\theoremstyle{plain}
\theoremstyle{plain}
\newtheorem{lemma}[theorem]{Lemma}\theoremstyle{plain}
\theoremstyle{plain}
\theoremstyle{plain}
\theoremstyle{plain}
\theoremstyle{plain}
\theoremstyle{plain}
\begin{document}
\title{\bf A Lex-BFS-based recognition algorithm for Robinsonian matrices}
\author[1,2]{Monique Laurent}
\author[1]{Matteo Seminaroti}
\date{}

\affil[1]{\small Centrum Wiskunde \& Informatica (CWI), Science Park 123, 1098 XG Amsterdam, The Netherlands}
\affil[2]{\small Tilburg University, P.O. Box 90153, 5000 LE Tilburg, The Netherlands}
\footnotetext{Correspondence to : \texttt{M.Seminaroti@cwi.nl} (M.~Seminaroti), \texttt{M.Laurent@cwi.nl} (M.~Laurent), CWI, Postbus 94079, 1090 GB, Amsterdam. Tel.:+31 (0)20 592 4386.}

\maketitle

\newcommand{\MS}{{\mathcal S}}
\newcommand{\oR}{{\mathbb R}}
\newcommand{\sfT}{{\sf T}}
\newcommand{\ignore}[1]{}
\newcommand{\QAP}{\text{\rm QAP}}
\newcommand{\MP}{\mathcal{P}}

\begin{abstract}
Robinsonian matrices arise in the classical seriation problem and play an important role in many applications where unsorted similarity (or dissimilarity) information must be reordered.
We present a new polynomial time algorithm to recognize Robinsonian matrices based on a new characterization of Robinsonian matrices in terms of straight enumerations of unit interval graphs. 
The algorithm is simple and is based essentially on lexicographic breadth-first search (Lex-BFS), using a divide-and-conquer strategy. 
When applied to a nonnegative symmetric $n\times n$ matrix with~$m$ nonzero entries and given as a weighted adjacency list, it runs in $O(d(n+m))$ time,  where $d$ is the depth of the recursion tree, which is at most the number of distinct nonzero entries of $A$.\\

\noindent
\textbf{Keywords:}
\textit{Robinson (dis)similarity; unit interval graph; Lex-BFS; seriation; partition refinement; straight enumeration}
\end{abstract}

\section{Introduction} 

An important question in  many classification problems is to  find an order of
 a collection of  objects   respecting some  given information about their pairwise (dis)similarities. 
The classic {seriation problem},  introduced by Robinson~\cite{Robinson51} for chronological dating, asks to order objects in such a way  that similar objects are ordered close to each other, and it has many applications in different fields (see~\cite{Innar10} and references therein). 

A symmetric matrix $A=(A_{ij})_{i,j=1}^n$ is  a {\em Robinson similarity} matrix  if its entries decrease monotonically in the rows and columns when moving away from the main diagonal, i.e., if 
$A_{ik}\le \min\{A_{ij},A_{jk}\}$ 
for all $1\le i\le j\le k\le n$.
Given a set of $n$ objects to order and a symmetric matrix $A=(A_{ij})$ which represents their pairwise correlations,
 the seriation problem asks to find (if it exists) a permutation $\pi$ of $[n]$ so that the permuted matrix $A_{\pi}=(A_{\pi(i)\pi(j)})$ is a Robinson matrix.
If such a permutation exists then $A$ is said to be a {\em Robinsonian similarity},
otherwise we say that data is affected by noise.
The definitions extend to dissimilarity matrices:  $A$ is a Robinson(ian) dissimilarity preciely when $-A$ is a Robinson(ian) similarity.
Hence  results can be directly transferred from one class to the other one.

\medskip
Robinsonian matrices play an important role in several hard combinatorial optimization problems and recognition algorithms are important in designing heuristic and approximation algorithms when the Robinsonian property is desired but the data is affected by noise (see e.g. \cite{Chepoi11,Fogel13,Laurent14}).
In the last decades, different characterizations of Robinsonian matrices have appeared in the literature, leading to different polynomial time recognition algorithms. Most characterizations are in terms of interval (hyper)graphs.

A graph $G=(V,E)$ is an {\em interval graph} if  its nodes can be labeled by intervals of the real line so that adjacent nodes correspond to intersecting intervals. 
Interval graphs arise frequently in applications and have been studied extensively in relation to hard optimization problems (see e.g. \cite{Bodlaender99,Cohen06,Mahesh91}).
A binary matrix has the {\em consecutive ones property (C1P)}  if its columns can be reordered in such a way that the ones are consecutive in each row.
Then a graph $G$ is an interval graph if and only if  its vertex-clique incidence matrix has C1P, where the rows are indexed by the vertices and the columns by the maximal cliques of $G$
\cite{Fulkerson65}.
 
A hypergraph $H=(V,\mathcal E)$ is a generalization of  the notion of  graph where  elements of $\mathcal E$, called  {\em hyperedges}, are subsets of $V$. The incidence matrix of $H$  is the $0/1$ matrix  whose rows and columns are labeled, respectively,  by the hyperedges and the vertices and with an entry 1 when the corresponding hyperedge contains  the corresponding vertex.
Then $H$ is   an {\em interval hypergraph} if its incidence matrix has C1P, i.e., if its vertices can be ordered in such a way that  hyperedges are intervals.

Given a dissimilarity matrix $A \in \MS^n$ and a scalar $\alpha$, the {\em threshold graph} $G_{\alpha}=(V,E_\alpha)$ has   edge set $E_{\alpha}= \{\{x,y\} : A_{xy} \leq \alpha\}$ and,
for  $x\in V$,  the ball $B(x,\alpha):=\{y \in V: A_{xy} \leq \alpha\}$  consists of  $x$ and its neighbors 
 in  $G_{\alpha}$. 
Let $\mathcal B$ denote the collection of all the balls of $A$ and  $H_{\mathcal{B}}$ denote the corresponding  \textit{ball hypergraph}, with vertex set  $V=[n]$ and with $\mathcal B$ as set of hyperedges. One can also build the intersection graph $G_{\mathcal B}$ of $\mathcal B$, where the balls are the vertices and connecting two vertices if the corresponding balls intersect.
Most of the existing algorithms are then based on the fact that a matrix $A$ is Robinsonian if and only if the ball hypergraph $H_{\mathcal{B}}$ is an interval hypergraph or, equivalently,  if the intersection graph $G_{\mathcal B}$ is an interval graph (see 
\cite{Mirkin84,Prea14}).

\medskip
Testing whether an $m\times n$  binary matrix with $f$ ones has C1P can be done in linear  time $O(n+m+f)$  (see  the first algorithm of Booth and Leuker \cite{Booth76} based on PQ-trees,  the survey  \cite{Dom09} and further references therein). 
Mirkin and Rodin \cite{Mirkin84} gave  the first polynomial algorithm to recognize Robinsonian matrices, with $O(n^4)$ running time, based on checking whether the ball hypergraph is an interval hypergraph and using the PQ-tree algorithm of \cite{Booth76} to check whether the incidence matrix has C1P. 
Later, Chepoi and Fichet \cite{Chepoi97} introduced a simpler algorithm that, using a divide-an-conquer strategy and sorting the entries of $A$, improved the running time to $O(n^3)$.
The same sorting preprocessing was used by Seston \cite{Seston08}, who improved the algorithm to $O(n^2\log n)$ by constructing paths in the threshold graphs of $A$.
Very recently, Pr\'ea and Fortin \cite{Prea14} presented a more sophisticated $O(n^2)$  algorithm, which uses the fact that the maximal cliques of the graph $G_{\mathcal B}$ are in one-to-one correspondence with the row/column indices of $A$.
Roughly speaking, they use the algorithm from Booth and Leuker~\cite{Booth76} to compute a first PQ-tree which they update throughout the algorithm.

A numerical spectral algorithm was introduced earlier by Atkins et al. \cite{Atkins98} for checking whether a similarity matrix $A$ is Robinsonian, based on reordering the entries of the Fiedler eigenvector of the Laplacian matrix associated to $A$, and it runs in $O(n(T(n)+n\log n))$ time, where $T(n)$ is the complexity of computing (approximately) the eigenvalues of an $n\times n$ symmetric matrix.

\medskip
In this paper we introduce a new combinatorial algorithm to recognize Robinsonian  matrices, based  
on characterizing them in terms of straight enumerations of unit interval graphs.
Unit interval graphs are a subclass of interval graphs, where the intervals labeling the vertices are required to have unit length. 
As is well known, they can be recognized in linear time $O(|V|+|E|)$ (see e.g. \cite{Corneil04,BangJensen07}~and references therein).
Many of the existing algorithms are based on the equivalence between unit interval graphs  and proper interval graphs (where the intervals should be pairwise incomparable) 
(see \cite{Roberts69,Roberts78}).
Unit  interval graphs have been recently characterized in terms of {\em straight enumerations}, which   are special orderings of the  classes of the `undistinguishability' equivalence relation, calling two vertices undistinguishable if they have the same closed neighborhoods (see \cite{Corneil95}).  This leads to alternative unit interval graph recognition algorithms (see \cite{Corneil95,Corneil04}), which we will use as main building block in our algorithm.
Our algorithm relies indeed on the fact that a similarity matrix $A$ is Robinsonian if and only if its level graphs (the analogues for similarities of the threshold graphs for dissimilarities) admit pairwise compatible straight enumerations (see Theorem \ref{thm: Robinsonian decomposition in uig}).

Our approach differs from the existing ones in the sense that it is not directly related  to interval (hyper)graphs, but it relies only on unit interval graphs (which are a simpler graph class than interval graphs) and on their straight enumerations.
Furthermore, our algorithm  does not rely on any sophisticated external algorithm such as the Booth and Leuker algorithm for C1P and no preprocessing to order the data is needed.
In fact, the most difficult task carried out by our algorithm  is a Lexicographic Breadth-First Search (abbreviated Lex-BFS), which is a variant of the classic Breadth-First Search (BFS), where the ties in the search are broken by giving preference to those vertices whose neighbors have been visited earliest (see \cite{Rose75} and \cite{Habib00}).
Following  \cite{Corneil04}, we in fact use the variant Lex-BFS+ introduced by \cite{Simon91} to compute straight enumerations.
Our algorithm uses a divide-and-conquer strategy with a merging step, tailored  to efficiently exploit the possible sparsity structure of the given similarity matrix $A$. 
Assuming the matrix $A$ is given as an adjacency list of an undirected weighted graph, our algorithm runs in $O(d(m+n))$ time, where $n$ is the size of $A$, $m$ is the number of nonzero entries of $A$ and $d$ is the depth of the recursion tree computed by the algorithm, which is upper bounded by the number $L$ of distinct nonzero entries  of $A$ (see Theorem \ref{thm:Robinsonian_recognition_complexity}).
Furthermore, we can return all the permutations reordering $A$ as a Robinson matrix using a PQ-tree data structure on which we perform only a few simple operations (see Section~\ref{secpermutations}).

Our algorithm uncovers an interesting link between straight enumerations of unit interval graphs and Robinsonian matrices which, to the best of our knowledge, has not been made before.
Moreover it provides an answer to an  open question posed by  M. Habib at the PRIMA Conference in Shanghai in June 2013, who asked whether it is possible to use Lex-BFS+ to recognize Robinsonian matrices \cite{Habib13}.
Alternatively one could check whether  the incidence matrix~$M$ of the ball hypergraph of $A$ has C1P, 
using the Lex-BFS based algorithm of 
\cite{Habib00},  in  time $O(r+c+f)$ time if $M$ is $r\times c$ with $f$ ones.
As $r\le nL$, $c=n$ and $f\le Lm$,  the overall time complexity is $O(L(n+m))$.
Interestingly, this approach is not mentioned by Habib. In comparison, an advantage of our approach is that it 
exploits the sparsity structure of the matrix $A$, as $d$ can be smaller than $L$.

This paper is an extended version of the work \cite{Laurent15}, which appeared in the proceedings of the 9th International Conference on Algorithms and Complexity (CIAC 2015) .

\subsubsection*{Contents of the paper}
Section \ref{secpreliminaries} contains preliminaries about  weak linear orders, straight enumerations and unit interval graphs.
In Section \ref{secRobinson} we characterize Robinsonian matrices in terms of straight enumerations of unit interval graphs.
In Section \ref{secalgmain} we introduce our recursive algorithm to recognize Robinsonian matrices, and then we discuss the complexity issues and explain how to return all the permutations reordering a given similarity matrix as a Robinson matrix.
The final Section \ref{secfinal} contains  some questions  for possible future work.

\section{Preliminaries}\label{secpreliminaries}

Throughout  $\MS^n$ denotes the set of symmetric $n\times n$ matrices. 
Given  a permutation $\pi$ of $[n]$ and a matrix $A\in~\mathcal~S^n$,  $A_\pi:=
(A_{\pi(i)\pi(j)})_{i,j=1}^n\in \MS^n$ is the matrix obtained by permuting  both the rows and columns of $A$  simultaneously according to $\pi$. For  $U\subseteq [n]$, $A[U]=(A_{ij})_{i,j\in U}$ is the principal submatrix of $A$ indexed by $U$. As we deal  exclusively with Robinson(ian) similarities, when speaking of a Robinson(ian) matrix, we mean a Robinson(ian)  similarity matrix.

\medskip
An ordered partition $(B_1,\ldots,B_p)$ of a finite set $V$ corresponds to  a {\em weak linear order} $\psi$ on $V$ (and vice versa), by setting
$x=_\psi y$ if $x,y$ belong to the same class $B_i$,  and 
$x <_{\psi} y$ if   $ x \in B_i$ and $ y \in B_j$ with $i<j$.
Then we also use the notation $\psi=(B_1,\ldots,B_p)$ and  $B_1<_\psi \ldots <_\psi B_p$. 
When all classes $B_i$ are singletons then $\psi$ is a  linear order (i.e., total order) of $V$.

The {\em reversal} of $\psi$ is the weak linear order, denoted $\overline \psi$, of the reversed ordered partition $(B_p,\ldots,B_1)$.
For  $U\subseteq V$, $\psi[U]$ denotes the {\em  restriction}  of the weak linear order $\psi$ to $U$.
Given disjoint subsets $U,W\subseteq V$, we say  $U\le_\psi W$ if $x\le_\psi y$ for all $x\in U,$ $y\in W$.
If $\psi_1$ and $\psi_2$ are weak linear orders on disjoint sets $V_1$ and $V_2$, then $\psi= (\psi_1,\psi_2)$ denotes their  {\em concatenation} which is a  weak  linear order  on $V_1\cup V_2$.

The following notions of  compatibility and refinement will play an important role in our treatment.
Two weak linear orders $\psi_1$ and $\psi_2$ on the same set $V$ are said to be {\em compatible} if 
there do not exist elements $x,y\in V$ such that $x<_{\psi_1} y$ and $y<_{\psi_2} x$.
Hence, $\psi_1$ and $\psi_2$ are compatible if and only if there exists a linear order $\pi$ of $V$ which is compatible with both $\psi_1$ and $\psi_2$
Then their {\em common refinement} is the weak linear order $\Psi=\psi_1\wedge \psi_2$ on $V$ defined by $x=_{\Psi} y$ if $x=_{\psi_\ell} y$ for all $\ell\in \{1,2\}$, and
$x<_{\Psi} y$ if $x\le_{\psi_\ell} y$ for all $\ell\in\{1,2\}$  with at least one strict inequality.

We will use the following fact, whose easy proof is omitted.
\begin{lemma}\label{thm: easy common refinement and pi}
Let $\psi_1,\dots,\psi_L$ be weak linear orders on $V$. Hence, $\psi_1,\dots,\psi_L$ are pairwise compatible if and only if there exists a linear order $\pi$ of $V$ which is compatible with each of $\psi_1,\dots,\psi_L$, in which case $\pi$ is compatible with their common refinement $\psi_1 \wedge \dots \wedge\psi_L$.
\end{lemma}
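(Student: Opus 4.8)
The statement is an equivalence together with a supplementary conclusion, so the plan is to prove the two implications separately and then read off the consequence about the common refinement. Throughout I will use that a linear order $\pi$ is a special weak linear order (all classes singletons), so that ``$\pi$ is compatible with $\psi_\ell$'' means exactly that $x<_{\psi_\ell}y$ forces $x<_\pi y$ for all $x\neq y$: indeed, if we had $y<_\pi x$ then the pair $(y,x)$ would witness incompatibility, and since $\pi$ is total this leaves only $x<_\pi y$.

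For the easy (``if'') direction, suppose a linear order $\pi$ compatible with each $\psi_\ell$ exists, and assume for contradiction that some pair $\psi_k,\psi_l$ is incompatible, witnessed by $x,y\in V$ with $x<_{\psi_k}y$ and $y<_{\psi_l}x$. Compatibility of $\pi$ with $\psi_k$ gives $x<_\pi y$, while compatibility of $\pi$ with $\psi_l$ gives $y<_\pi x$, contradicting that $\pi$ is a linear order. Hence the $\psi_\ell$ are pairwise compatible.

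For the ``only if'' direction I would build the common refinement explicitly. Define a relation on $V$ by declaring $x\preceq y$ whenever $x\leq_{\psi_\ell}y$ for every $\ell\in\{1,\dots,L\}$. Transitivity and antisymmetry-up-to-equality of $\preceq$ are inherited termwise from the weak linear orders $\psi_\ell$ and use no hypothesis. The key point, and the only place where pairwise compatibility enters, is totality: for distinct $x,y$, pairwise compatibility forbids having $x<_{\psi_k}y$ for one index and $y<_{\psi_l}x$ for another, so all strict comparisons among the $\psi_\ell$ agree in direction, and therefore either $x\leq_{\psi_\ell}y$ for all $\ell$ or $y\leq_{\psi_\ell}x$ for all $\ell$, i.e. $x\preceq y$ or $y\preceq x$. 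Thus $\preceq$ is a total preorder, that is, a weak linear order, which is precisely the common refinement $\Psi=\psi_1\wedge\dots\wedge\psi_L$; by construction $x<_\Psi y$ implies $x\leq_{\psi_\ell}y$ for all $\ell$, and conversely (again by pairwise compatibility) a strict comparison $x<_{\psi_\ell}y$ implies $x<_\Psi y$, so $\Psi$ refines, hence is compatible with, each $\psi_\ell$. Finally, any ordering $\pi$ obtained from $\Psi$ by breaking ties within each class arbitrarily is a linear order compatible with $\Psi$, and since every strict $\psi_\ell$-comparison is a strict $\Psi$-comparison, $\pi$ is compatible with each $\psi_\ell$.

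The supplementary claim then comes for free and holds for \emph{any} witness $\pi$. Given a linear order $\pi$ compatible with each $\psi_\ell$, suppose $y<_\Psi x$; by the definition of the common refinement there is an index $m$ with $y<_{\psi_m}x$, and compatibility of $\pi$ with $\psi_m$ forbids $x<_\pi y$, forcing $y<_\pi x$. Hence $\pi$ never reverses a strict $\Psi$-comparison, so $\pi$ is compatible with $\Psi$. I expect the only genuinely delicate point to be the verification of totality of $\preceq$: this is exactly where the hypothesis must be pairwise compatibility of all the $\psi_\ell$ (rather than compatibility with a single fixed order), and it is what makes the common refinement well defined in the first place; everything else is a routine termwise check or a direct appeal to the definition of compatibility.
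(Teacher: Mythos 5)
Your argument is correct and complete: the ``if'' direction is the straightforward contradiction, and the ``only if'' direction rightly identifies totality of the relation $x\preceq y \Leftrightarrow x\le_{\psi_\ell}y$ for all $\ell$ as the one place where pairwise compatibility is needed, after which linearizing the resulting weak linear order $\Psi=\psi_1\wedge\dots\wedge\psi_L$ and checking that any witness $\pi$ respects every strict $\Psi$-comparison finishes the job. The paper explicitly omits the proof of this lemma as easy, so there is nothing to compare against, but what you wrote is exactly the standard argument the authors presumably had in mind.
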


In what follows $V=[n]=\{1,\ldots,n\}$ is the vertex set of a graph $G=(V,E)$,  whose edges  are  pairs $\{x,y\}$ of distinct vertices $x,y\in V$.
For  $x\in V$, we denote by $N(x)=\{y\in V: \{x,y\}\in E\}$ the \textit{neighborhood} of $x$. Then, its  {\em closed  neighborhood} is the set $N[x] =  \{x\}\cup N(x)$. Two vertices $x,y\in V$ are {\em undistinguishable} if $N[x]=N[y]$. This defines an equivalence relation on $V$, whose classes are called the {\em blocks} of $G$.
Clearly, each block   is a clique of $G$. 
Two distinct blocks $B$ and $B'$  are said to be {\em adjacent}  if there exist two vertices $x \in B,$ $ y \in B'$ that are adjacent in $G$ or, equivalently, if $B\cup B'$ is a clique of $G$.
A {\em straight enumeration}  of $G$ is then a linear order $\phi=(B_1,\dots,B_p)$  of the blocks of $G$ such that, for any block $B_i$, the block $B_i$ and the blocks $B_j$ adjacent to it are consecutive in the linear order (see \cite{Hell02}).
The blocks $B_1$ and $B_p$ are called the {\em end blocks} of $\phi$ and $B_i$ (with $1<i<p$) are its  {\em inner blocks}.
Having a straight enumeration is a strong property, and not all graphs have one. In fact, this notion arises naturally in the context of unit interval graphs as recalled below.

A graph $G=(V=[n],E)$ is called an {\em interval graph} if its vertices can be mapped to intervals $I_1,\ldots,I_n$ of the real line such that, for distinct vertices $x,y\in V$, $\{x,y\}\in E$ if and only if $I_x\cap I_y\neq \emptyset$.
Such a set of intervals is  called a \textit{realization} of $G$, and it is not unique.
If the graph G admits a realization by unit intervals, then $G$ is said to be a \textit{unit interval graph}.

Interval graphs and unit interval graphs play an important role in many applications in different fields. Many NP-complete graph problems can be solved in polynomial time on interval graphs (this holds e.g. for the bandwidth problem \cite{Mahesh91}).
However, there are still problems which remains NP-hard also for interval graphs (this holds e.g. for the minimal linear arrangement problem \cite{Cohen06}).
It is well known that interval graphs and unit interval graphs can be recognized in $O(|V|+|E|)$ time \cite{Booth76,Simon91,Looges93,Corneil95,deFigueiredo95,Deng96,Corneil98, Habib00,Corneil04,BangJensen07}.
For a more complete overview on linear recognition algorithms for unit interval graphs, see \cite{Corneil04} and references therein.
Most of the above mentioned algorithms are based on the equivalence between unit interval graphs and proper interval graphs (i.e., graphs  admitting a realization  by pairwise incomparable intervals) or indifference graphs \cite{Roberts69}.
Furthermore, there exist several equivalent characterizations for unit interval graphs. 
The following one in terms of straight enumerations will play a central role in~our paper.

\begin{theorem}[\bf Unit interval graphs and straight enumerations] \label{thm:uig and straight enumeration} 
\cite{Deng96} A graph $G$ is a unit interval graph if and only if it has a straight enumeration. 
Moreover, if $G$ is  connected, then it has a unique (up to reversal) straight enumeration.
\end{theorem}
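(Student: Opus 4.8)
The plan is to prove the equivalence and the uniqueness statement separately, with the forward direction (unit interval graph implies straight enumeration) being the more constructive part.

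First I would establish that a unit interval graph $G$ admits a straight enumeration. The natural approach is to exploit a realization of $G$ by unit intervals $I_1,\ldots,I_n$, say with $I_x = [\ell_x, \ell_x + 1)$, and to use the left endpoints $\ell_x$ to impose a linear order on the vertices. The key observation is that two vertices $x,y$ are undistinguishable (i.e.\ $N[x] = N[y]$) precisely when their intervals can be taken to share the same left endpoint, so the blocks of $G$ correspond to groups of intervals with a common left endpoint. Ordering the blocks by increasing left endpoint gives a candidate linear order $\phi = (B_1,\ldots,B_p)$. I would then verify the defining consecutivity property of a straight enumeration: for a fixed block $B_i$, the block $B_i$ together with all blocks adjacent to it form a consecutive run in $\phi$. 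This follows from the unit-length property, since if $B_i$ is adjacent to $B_j$ and $B_k$ with $j < k$, then every block $B_h$ with $j \le h \le k$ must also intersect $I_{B_i}$ because its left endpoint lies between those of $B_j$ and $B_k$, both within distance $1$ of the interval attached to $B_i$. This monotonicity is exactly where unit (rather than arbitrary) interval lengths are used.

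For the converse direction (straight enumeration implies unit interval graph), I would take a straight enumeration $\phi = (B_1,\ldots,B_p)$ and build an explicit unit interval realization from it. The idea is to assign to the vertices in block $B_i$ a common left endpoint $\ell_i$, chosen so that the gaps $\ell_{i+1} - \ell_i$ encode adjacency: consecutive blocks that are adjacent get left endpoints within distance $1$, while the first non-adjacent block beyond $B_i$ gets pushed past distance $1$. The consecutivity property of the straight enumeration guarantees that the set of blocks adjacent to $B_i$ is an interval of indices around $i$, which is exactly what makes such a consistent choice of left endpoints possible; one then checks that $I_x \cap I_y \ne \emptyset$ reproduces the edge set of $G$.

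Finally, for the uniqueness statement under the hypothesis that $G$ is connected, I would argue that the straight enumeration is determined up to reversal by the adjacency structure of the blocks. The blocks and their adjacencies themselves form a graph (the quotient of $G$ by the undistinguishability relation), and the straight enumeration realizes this quotient as a path-like structure; connectedness of $G$ forces this structure to be a single path of blocks, whose only linear orderings satisfying the consecutivity condition are the two reversals of one another. The cleanest way to see this is to identify the two end blocks as the only blocks whose adjacent blocks all lie on one side, and then show the ordering propagates uniquely inward. I expect the main obstacle to be the converse direction: turning the purely combinatorial consecutivity condition into a numerically valid assignment of left endpoints requires care in choosing the gap sizes so that adjacency is reproduced in both directions simultaneously, and verifying no spurious intersections appear. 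Since Theorem~\ref{thm:uig and straight enumeration} is quoted from \cite{Deng96}, I would in practice cite that source rather than reprove it in full, but the construction above is the route I would take to verify it.
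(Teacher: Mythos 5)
The paper does not prove this statement at all: Theorem~\ref{thm:uig and straight enumeration} is imported verbatim from Deng, Hell and Huang \cite{Deng96} (with no proof environment following it), so there is no in-paper argument to compare against, and your closing remark that you would in practice cite \cite{Deng96} is exactly what the authors do. Your sketch of how one \emph{would} prove it follows the standard route and is essentially sound, but two steps deserve flagging. First, in the forward direction, the claim that the blocks of $G$ correspond to groups of intervals sharing a common left endpoint is not literally true for an arbitrary unit interval realization (a triangle is a single block, yet its three intervals can have three distinct left endpoints); you either need a perturbation argument showing such a realization exists, or, more cleanly, you should order the blocks by (say) the smallest left endpoint of a member and first verify that the vertices of each block occupy consecutive positions in the left-endpoint order. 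Your consecutivity argument via the inequality $|\ell_h-\ell_i|\le 1$ then goes through unchanged. Second, in the uniqueness argument, characterizing the end blocks as ``the only blocks whose adjacent blocks all lie on one side'' is circular as stated, since ``one side'' presupposes an enumeration; to make the propagation work you need an intrinsic, enumeration-free characterization of the end blocks (or an induction on the number of blocks), and the block-adjacency graph need not literally be a path (it can contain cliques of mutually adjacent blocks), only ``interval-like'' in the sense that each block's closed neighborhood is consecutive. Neither issue is fatal, but both require genuine work to close, which is presumably why the paper, like you, defers to \cite{Deng96}.
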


On the other hand, if $G$ is not connected, then any possible linear ordering of the connected components combined with any possible orientation of the straight enumeration of each connected component induces a straight enumeration of $G$.
The next theorem summarizes several known characterizations for unit interval graphs, combining results from   
\cite{Corneil95,Olariu91,Looges93,Roberts69,Roberts71,Gilmore64}.
Recall that  $K_{1,3}$ is the graph with one degree-3 vertex connected to three degree-1 vertices (also known as \textit{claw}).

\begin{theorem}\label{thm:unit graphs main theorem}
The following are equivalent for a graph $G=(V,E)$.
\begin{enumerate}
\item[(i)] $G$ is a unit interval graph.
\item[(ii)] $G$ is an interval graph with no induced subgraph $K_{1,3}$.
\item[(iii)] {\bf (3-vertex condition)} There is a linear  ordering $\pi$ of $V$ such that, for all $x,y,z\in V$,
\begin{equation}\label{eq3v}
 x<_\pi y<_\pi z, \ \{x,z\}  \in E \Longrightarrow \{x,y\},\{y,z\}  \in E.
 \end{equation}
\item[(iv)] {\bf (Neighborhood condition)} There is a linear  ordering $\pi$ of $V$ such that  for any $x\in V$ the vertices in  $N[x]$ are consecutive  with respect to~$\pi$.
\item[(v)] {\bf (Clique condition)} There is a linear ordering $\pi$ of $V$ such that the vertices contained in any maximal clique of~$G$ are consecutive with respect to $\pi$.
\end{enumerate}
\end{theorem}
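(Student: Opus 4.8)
The plan is to establish all five equivalences through a single cycle of implications, using Theorem~\ref{thm:uig and straight enumeration} to make the passage out of~(i) cheap, handling the equivalence of the three ordering conditions (iii)--(v) by elementary combinatorics, and isolating the \emph{construction of a realization} as the one genuinely delicate point. Concretely, I would prove (i)$\Rightarrow$(iv)$\Rightarrow$(v)$\Rightarrow$(iii)$\Rightarrow$(i), which already yields the mutual equivalence of (i), (iii), (iv), (v), and then close the loop with the separate pair (i)$\Rightarrow$(ii) and (ii)$\Rightarrow$(i). The hard part throughout is the reverse direction back to~(i): every ``combinatorial'' hypothesis must eventually be turned into an actual family of unit intervals.

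For (i)$\Rightarrow$(iv), I would invoke Theorem~\ref{thm:uig and straight enumeration}: a unit interval graph has a straight enumeration $(B_1,\dots,B_p)$ of its blocks. Let $\pi$ list the blocks in this order, the vertices inside each block contiguously. For $x\in B_i$, the closed neighborhood $N[x]$ is exactly the union of $B_i$ with the blocks adjacent to $B_i$ (recall a block is a clique and two blocks are adjacent iff their union is a clique), and these blocks are consecutive by the very definition of a straight enumeration; hence $N[x]$ is an interval of~$\pi$. The two remaining combinatorial steps are short. For (iv)$\Rightarrow$(v): given a maximal clique $C$ and $x<_\pi y<_\pi z$ with $x,z\in C$, each $v\in C$ has $x,z\in N[v]$, so by (iv) also $y\in N[v]$; thus $y$ is adjacent to (or equal to) every vertex of $C$, and maximality forces $y\in C$, so $C$ is consecutive. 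For (v)$\Rightarrow$(iii): if $x<_\pi y<_\pi z$ and $\{x,z\}\in E$, choose a maximal clique $C\supseteq\{x,z\}$; by (v) it is consecutive, whence $y\in C$ and $\{x,y\},\{y,z\}\in E$.

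The main obstacle is (iii)$\Rightarrow$(i). Here I would first record the \emph{monotone staircase} property of an order $\pi$ satisfying (iii): writing $N[x]$ (which is consecutive, since (iii) gives (iv)) as the position-interval $[\ell(x),r(x)]$, the $3$-vertex condition shows $x<_\pi x'$ implies $\ell(x)\le\ell(x')$ and $r(x)\le r(x')$ (if, say, $\ell(x')<\ell(x)$, applying \eqref{eq3v} to the witnessing edge yields a contradiction). Numbering the vertices $v_1,\dots,v_n$ along $\pi$, I then assign $v_i\mapsto J_i:=[\,i,\,r(v_i)\,]$; since $\ell(v_i)\le i$, one checks $J_i\cap J_j\neq\emptyset$ (for $i<j$) iff $j\le r(v_i)$ iff $\{v_i,v_j\}\in E$, so $\{J_i\}$ realizes $G$ as an interval graph. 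The left endpoints are strictly increasing and the right endpoints nondecreasing; a small perturbation $r(v_i)\mapsto r(v_i)+\varepsilon i$ with $\varepsilon<1/n$ makes the right endpoints strictly increasing without altering any intersection (the only containments occurred between vertices sharing a right endpoint, i.e.\ essentially twins), producing a \emph{proper} interval representation. I would then invoke Roberts' theorem that proper interval graphs coincide with unit interval graphs~\cite{Roberts69} to conclude~(i). The delicate bookkeeping is precisely this monotonicity-plus-perturbation argument and the appeal to proper${}={}$unit.

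It remains to fold in~(ii). The implication (i)$\Rightarrow$(ii) is direct: a unit interval graph is an interval graph, and three pairwise-disjoint unit intervals cannot all meet a common unit interval (a short endpoint computation), so no induced $K_{1,3}$ can occur. For the converse (ii)$\Rightarrow$(i), the cleanest route reuses the previous paragraph: a claw-free interval graph admits, by the classical structure theory (Roberts/Gilmore--Hoffman, \cite{Roberts69,Roberts71,Gilmore64}), a proper interval representation, which is unit by the same proper${}={}$unit equivalence; alternatively one shows directly that claw-freeness lets an interval representation ordered by left endpoints satisfy~(iii), after which the cycle applies. Either way the only real work is again the construction of a containment-free (proper) representation, confirming that this is the single conceptual hurdle, all other implications being routine.
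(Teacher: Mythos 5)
The paper does not actually prove this theorem: it is stated as a compilation of known results, with the citations \cite{Corneil95,Olariu91,Looges93,Roberts69,Roberts71,Gilmore64} standing in for a proof. Your proposal therefore takes a genuinely different route simply by supplying an argument. The cycle (i)$\Rightarrow$(iv)$\Rightarrow$(v)$\Rightarrow$(iii)$\Rightarrow$(i) is correct: the passage out of (i) via Theorem~\ref{thm:uig and straight enumeration} and the identity $N[x]=B_i\cup\bigcup\{B_j: B_j \text{ adjacent to } B_i\}$ is exactly right, the two combinatorial steps (iv)$\Rightarrow$(v)$\Rightarrow$(iii) are clean, and your monotone-staircase construction for (iii)$\Rightarrow$(i) (left endpoints $i$, right endpoints $r(v_i)$, then an $\varepsilon i$ perturbation that cannot change any integer comparison $j\le r(v_i)$) is a correct and pleasantly explicit way to manufacture a proper representation. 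What your proof buys over the paper's citation is self-containedness of the four ordering characterizations and an explicit algorithm-flavoured construction; what it still borrows from the literature is the same thing the paper borrows, namely Roberts' proper${}={}$unit equivalence, plus the implication (ii)$\Rightarrow$(i), which you essentially cite rather than prove (saying a claw-free interval graph ``admits a proper interval representation by the classical structure theory'' is the content of (ii)$\Rightarrow$(i) itself). One small caution on your parenthetical alternative for (ii)$\Rightarrow$(i): ordering an interval representation by left endpoints gives $\{x,y\}\in E$ for free whenever $x<_\pi y<_\pi z$ and $\{x,z\}\in E$, but $\{y,z\}\in E$ can fail when $I_y$ is properly contained in $I_x$, and excluding that situation requires a genuine argument from claw-freeness (a contained interval whose right end stops short must be shown to create a $K_{1,3}$ or be removable); as written that sketch is not yet a proof, though it is flagged only as an alternative and does not affect the main line.
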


\section{Robinsonian matrices and unit interval graphs}\label{secRobinson}

In this section we characterize Robinsonian matrices in terms of straight enumerations of unit interval graphs. 
We focus first on binary Robinsonian matrices.
We may view any symmetric binary matrix with all diagonal entries equal to 1  as the {\em extended} adjacency matrix of a graph.
The equivalence between binary Robinsonian matrices and indifference graphs (and thus with unit interval graphs) was first shown by Roberts \cite{Roberts69}.
Furthermore, as observed, e.g., by Corneil et al. \cite{Corneil95}, the ``neighborhood  condition"  for a graph is equivalent to its extended adjacency matrix having C1P. 
Hence we have  the following equivalence between Robinsonian binary matrices and unit interval graphs, which also follows as a direct application of Theorem \ref{thm:unit graphs main theorem}\textit{(iii)}.

\begin{lemma}\label{thm:Robinsonian matrices and 3-vertex condition}
Let $G=(V,E)$ be a graph and  $A_G$ be its extended adjacency matrix. Then,
$A_G$ is a Robinsonian similarity if and only if  $G$ is a unit interval graph.
\end{lemma}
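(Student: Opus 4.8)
The plan is to prove the equivalence by connecting the definition of a Robinson similarity directly to the 3-vertex condition of Theorem~\ref{thm:unit graphs main theorem}(iii), exploiting the binary structure of the extended adjacency matrix. Recall that $A_G$ is the symmetric $0/1$ matrix with $(A_G)_{xx}=1$ for all $x$ and, for distinct $x,y$, $(A_G)_{xy}=1$ precisely when $\{x,y\}\in E$. The key observation is that, for a binary matrix with unit diagonal, the Robinson inequality $A_{ik}\le\min\{A_{ij},A_{jk}\}$ for $i\le j\le k$ becomes trivial except in the single nontrivial case where $A_{ik}=1$, in which case it forces $A_{ij}=A_{jk}=1$. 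Translating back into graph language, this says exactly that whenever $i<j<k$ and $\{i,k\}\in E$, we must also have $\{i,j\},\{j,k\}\in E$, which is the 3-vertex condition \eqref{eq3v}.

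First I would fix notation: $A_G$ is Robinsonian by definition means there is a permutation $\pi$ of $[n]$ such that the permuted matrix $(A_G)_\pi$ is a Robinson similarity, i.e. $(A_G)_\pi$ satisfies $((A_G)_\pi)_{ik}\le\min\{((A_G)_\pi)_{ij},((A_G)_\pi)_{jk}\}$ for all $i\le j\le k$. Writing this out with $((A_G)_\pi)_{ij}=(A_G)_{\pi(i)\pi(j)}$, the condition becomes a statement about the vertices $\pi(1),\dots,\pi(n)$ listed in the order induced by $\pi$. Identifying this order with the linear order of Theorem~\ref{thm:unit graphs main theorem}(iii), I would then carefully check that the Robinson property of $(A_G)_\pi$ is equivalent to the 3-vertex condition holding for the linear ordering $\pi^{-1}$ (the ordering placing vertex $x$ at position $\pi^{-1}(x)$).

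The main argument is a case analysis on the diagonal entries, and I expect this to be where the only real care is needed. For the forward direction, assume $(A_G)_\pi$ is Robinson. Given $x<y<z$ in the ordering with $\{x,z\}\in E$, the corresponding off-diagonal entry equals $1$, so by the Robinson inequality applied with $i,j,k$ the positions of $x,y,z$ we get that the $(x,y)$ and $(y,z)$ entries are at least $1$, hence equal to $1$, giving $\{x,y\},\{y,z\}\in E$; note the diagonal entries never cause trouble because they equal $1$, the maximum possible value, so inequalities involving repeated indices are automatic. For the converse, assume the 3-vertex condition holds for some ordering and verify that $A_G$, permuted so that its rows and columns follow this order, satisfies every Robinson inequality: the only way the inequality $A_{ik}\le\min\{A_{ij},A_{jk}\}$ could fail is $A_{ik}=1$ while one of $A_{ij},A_{jk}$ equals $0$, and when $i<j<k$ this is precisely what \eqref{eq3v} forbids, while the cases with a repeated index reduce to $1\le 1$ or $A_{ik}\le 1$ and hold trivially.

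Finally I would assemble the two directions: $A_G$ is a Robinsonian similarity iff some ordering of $V$ satisfies the 3-vertex condition \eqref{eq3v}, which by Theorem~\ref{thm:unit graphs main theorem}(i)$\Leftrightarrow$(iii) is equivalent to $G$ being a unit interval graph. The principal obstacle is not conceptual difficulty but bookkeeping precision: one must keep the permutation $\pi$ and its inverse straight, and must confirm that allowing $i=j$ or $j=k$ in the Robinson definition adds no extra constraints beyond the strict three-vertex case, so that the equivalence with \eqref{eq3v}, which is stated only for distinct $x,y,z$, is exact.
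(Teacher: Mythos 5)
Your proof is correct and follows exactly the route the paper intends: the paper states the lemma ``follows as a direct application of Theorem~\ref{thm:unit graphs main theorem}\textit{(iii)}'', and your argument is precisely the verification that, for a binary matrix with unit diagonal, the Robinson inequalities reduce to the 3-vertex condition \eqref{eq3v}, after which the equivalence (i)$\Leftrightarrow$(iii) finishes the job. Your extra care with the degenerate cases $i=j$ or $j=k$ and with the permutation bookkeeping is exactly the content the paper leaves implicit.
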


The next result  characterizes  the linear orders that reorder the extended adjacency matrix $A_G$ 
as a Robinson matrix in terms of the straight enumerations of $G$. It is simple but will play a central role in our algorithm for recognizing Robinsonian similarities.

\begin{theorem}\label{thm:Robinson reordering and straight enumeration}
Let  $G=(V,E)$ be a 
graph. 
A linear order $\pi$ of $V$ reorders $A_G$ as a Robinson matrix if and only if there exists a straight enumeration of $G$ whose corresponding weak linear order $\psi$ is compatible with $\pi$, i.e., satisfies:
\begin{equation}\label{eqequi}
\forall x, y \in V\  \text{  with } \ x\neq_\psi y \qquad x<_{\pi} y \quad \Longleftrightarrow \quad x < _{\psi} y.
\end{equation}
\end{theorem}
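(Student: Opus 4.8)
The plan is to reduce the statement to the 3-vertex condition of Theorem~\ref{thm:unit graphs main theorem}(iii) and then to match a straight enumeration to the given order $\pi$. Throughout I write $x\sim y$ for $\{x,y\}\in E$. First I would record the key reformulation of the left-hand side. Since $A_G$ is a binary matrix with all diagonal entries equal to $1$, the Robinson inequalities $A_{\pi(i)\pi(k)}\le \min\{A_{\pi(i)\pi(j)},A_{\pi(j)\pi(k)}\}$ for $i\le j\le k$ are automatically satisfied whenever two of the indices coincide (the relevant entries are then diagonal ones equal to $1$), and for $i<j<k$ they unfold, after setting $x=\pi(i)$, $y=\pi(j)$, $z=\pi(k)$, into exactly the implication \eqref{eq3v}. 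Hence $\pi$ reorders $A_G$ as a Robinson matrix if and only if $\pi$ satisfies the 3-vertex condition, and it suffices to prove that $\pi$ satisfies \eqref{eq3v} if and only if some straight enumeration of $G$ is compatible with $\pi$ in the sense of \eqref{eqequi}.

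For the direction assuming a compatible straight enumeration $\psi=(B_1,\dots,B_p)$, the main tool is the defining interval property of straight enumerations: for each block $B_i$, the index set $\{i\}\cup\{j: B_j \text{ adjacent to } B_i\}$ is an interval of consecutive integers. From this I would deduce that whenever two blocks $B_a$ and $B_c$ with $a\le c$ contain the two endpoints of an edge, every block $B_b$ with $a\le b\le c$ is equal or adjacent to both $B_a$ and $B_c$, so that $B_a\cup\dots\cup B_c$ is a clique. Now take $x<_\pi y<_\pi z$ with $\{x,z\}\in E$, and let $B_a,B_b,B_c$ be their respective blocks. Compatibility \eqref{eqequi} forces $a\le b\le c$, while $\{x,z\}\in E$ makes $B_a$ and $B_c$ equal or adjacent; the clique property just described then yields $\{x,y\},\{y,z\}\in E$, which is \eqref{eq3v}.

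For the converse, assuming $\pi$ satisfies \eqref{eq3v}, I would build a compatible straight enumeration directly from $\pi$. The first step is to show that every block of $G$ is an interval for $\pi$, i.e.\ its vertices are $\pi$-consecutive: if $x$ and $y$ lie in a common block and $x<_\pi z<_\pi y$, then $x\sim y$ together with \eqref{eq3v} gives $x\sim z$ and $z\sim y$, and a short case analysis on the $\pi$-position of an arbitrary vertex $w$ (applying \eqref{eq3v} each time to a suitable triple among $w,x,y,z$ whose $\pi$-order is already known) shows that $w\in N[x]\Leftrightarrow w\in N[z]$ for every $w$, whence $N[z]=N[x]$ and $z$ lies in the same block. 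Because distinct blocks then do not interleave under $\pi$, the order $\pi$ induces a linear order on the blocks; let $\psi$ be the corresponding weak linear order, which is compatible with $\pi$ by construction.

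It remains to verify that $\psi$ is a straight enumeration, i.e.\ that for each block $B_i$ the set $S_i=\{i\}\cup\{j: B_j\text{ adjacent to }B_i\}$ is an interval. I would argue by a direct case analysis: given $a<b<c$ with $a,c\in S_i$ and $b\neq i$, I pick representatives of $B_a,B_b,B_c,B_i$ (whose $\pi$-order is dictated by the block indices) and apply \eqref{eq3v} to one well-chosen triple in each of the cases $i=a$, $i=c$, $i<a$, $i>c$, and $a<i<c$, obtaining in every case that $B_b$ is adjacent to $B_i$, so $b\in S_i$. I expect the main obstacle to be precisely this converse direction: both the proof that the blocks are $\pi$-intervals and the verification of the straight-enumeration property require careful but routine bookkeeping, in each case of which the 3-vertex condition \eqref{eq3v} is applied to a triple whose $\pi$-order is already fixed. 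The forward direction, by contrast, follows cleanly from the interval property built into the definition of straight enumerations.
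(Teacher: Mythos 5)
Your proposal is correct and follows essentially the same route as the paper: both reduce Robinson-ness of $A_G$ under $\pi$ to the 3-vertex condition~\eqref{eq3v}, in one direction construct the block order induced by $\pi$ (after checking that blocks are $\pi$-intervals) and verify the straight-enumeration property from~\eqref{eq3v}, and in the other direction use the consecutiveness of a block together with its adjacent blocks to turn the relevant stretch of blocks into a clique. You simply spell out in more detail the case analyses that the paper dismisses as "easy to see".
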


\begin{proof}
Assume  that $\pi$ is a linear order of $V$ that reorders $A_G$ as a Robinson matrix.
Then it is easy to see that the 3-vertex condition  holds  for $\pi$ and that each block of $G$ is an interval w.r.t. $\pi$.
Therefore the order $\pi$ induces an order $\psi$ of the blocks: $B_1<_\psi \ldots <_\psi B_p$, with 
$B_i<_\psi B_j$ if and only if $x<_\pi y$ for all $x\in B_i$ and $y\in B_j$. In other words, $\psi$ is compatible with $\pi$ by construction.
Moreover,  $\psi$ defines a straight enumeration of $G$. Indeed, if $B_i<_\psi B_j <_\psi B_k$ and $B_i,B_k$ are adjacent then $B_j$ is adjacent to $B_i$ and $B_k$, since this property follows directly from  the 3-vertex condition for $\pi$. 

Conversely, assume that $B_1<_\psi \ldots <_\psi B_p$ is a straight enumeration of $G$ and let $\pi$ be a linear order of $V$ which is compatible with $\psi$, i.e., satisfies (\ref{eqequi}). We show that  $\pi$ reorders $A_G$ as a Robinson matrix. That is, we show that
if $x<_\pi y <_\pi z$,  then $(A_G)_{xz} \leq \min \{(A_G)_{xy}, (A_G)_{yz} \}$ or, equivalently, 
that $\{x,z\} \in E$ implies $\{x,y\}, \{y,z\}\in E$. If $x,z$ belong to the same block $B_i$ then $y\in B_i$ (using (\ref{eqequi})) and thus $\{x,y\},\{y,z\}\in E$ since $B_i$ is a clique.
Assume now that $x\in B_i$, $z\in B_k$ and $\{x,z\}\in E$. Then, $B_i<_\psi B_k$ and  $B_i$, $B_k$ are adjacent blocks and thus $B_i\cup B_k$ is a clique.
If $y\in B_i$ then $y$ is adjacent to $x$ and $z$ (since $B_i\cup B_k$ is a clique).
Analogously if $y\in B_k$. Suppose now that $y\in B_j$. Then, using (\ref{eqequi}), we have that $B_i<_\psi B_j<_\psi B_k$.
As $\psi$ is a straight enumeration with $B_i$,$B_k$ adjacent it follows that $B_j$ is adjacent to $B_i$ and to $B_k$ and thus $y$ is adjacent to $x$ and $z$.
\end{proof}

Hence,  in order to find the permutations reordering a given binary matrix $A$ as a Robinson matrix, it suffices to find all the possible straight enumerations of the corresponding graph $G$.
As is shown e.g. in \cite{Corneil95,Deng96}, this is a simple task and can be done in linear time. 
This is coherent with the fact that C1P  can be checked in linear time (see \cite{Dom09} and references therein). 

\medskip
We now  consider a general (nonbinary)  matrix $A$.
We first introduce its  `level graphs',  the analogues for similarity matrices of the  threshold graphs for dissimilarities.
Let $\alpha_0<\alpha_1<\dots<\alpha_L$ denote the distinct values taken by the entries of $A$.
The graph $G^{(\ell)}=(V,E_{\ell})$,  whose edges are the pairs $\{x,y\}$ with $A_{xy}\ge \alpha_\ell$,  is called the {\em $\ell$-th level graph} of~$A$. 
Let $J$ be the all ones matrix. Clearly, both $J$ and $-J$ are Robinson matrices. 
Hence, we may and will assume, without loss of generality, that $\alpha_{0}=0$. Then, $A$ is nonnegative and $G^{(1)}$ is its support graph. 
The level graphs can be used to decompose $A$ as a conic combination of binary matrices and, as already observed by Roberts \cite{Roberts78}, $A$ is Robinson precisely when these binary matrices are Robinson.
This is summarized in the next lemma, whose easy proof is omitted.

\begin{lemma}\label{thm:level graphs}
Let $A \in \mathcal{S}^n$ with distinct values $\alpha_0<\alpha_1<\dots<\alpha_L$ and with level graphs $G^{(1)},\ldots,G^{(L)}$. 
Then:
$$A=\alpha_0J+\sum_{\ell=1}^L{\left( \alpha_{\ell}-\alpha_{\ell-1}\right)A_{G^{(\ell)}}}.$$
Moreover, $A$ is Robinson if and only if $A_{G^{(\ell)}}$ is Robinson for each $\ell \in [L]$.
\end{lemma}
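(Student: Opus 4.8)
The plan is to prove the matrix identity first and then the Robinson equivalence, treating the two assertions of the lemma separately. For the identity, I would fix a pair of indices $x,y\in V$ and compute the $(x,y)$-entry of the right-hand side. Let $\alpha_\ell$ be the common value $A_{xy}$, say $A_{xy}=\alpha_k$ with $k\in\{0,1,\ldots,L\}$. By definition of the level graphs, the edge $\{x,y\}$ belongs to $E_\ell$ precisely when $\alpha_k\ge\alpha_\ell$, i.e. when $\ell\le k$; hence $(A_{G^{(\ell)}})_{xy}=1$ for $\ell\le k$ and $0$ for $\ell>k$ (here I also use the convention that diagonal entries of each $A_{G^{(\ell)}}$ equal $1$, consistent with the extended adjacency matrix). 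The $(x,y)$-entry of the right-hand side is therefore $\alpha_0+\sum_{\ell=1}^{k}(\alpha_\ell-\alpha_{\ell-1})$, which telescopes to $\alpha_k=A_{xy}$. I would note that the case $k=0$ gives the entry $\alpha_0$, matching $\alpha_0 J$, and that the diagonal is handled uniformly since $A_{xx}=\alpha_L\ge\alpha_\ell$ for all $\ell$. This establishes the decomposition.

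For the second assertion, the \emph{if} direction is immediate from the identity: if every $A_{G^{(\ell)}}$ is Robinson, then $A$ is a nonnegative combination $\alpha_0J+\sum_{\ell=1}^L(\alpha_\ell-\alpha_{\ell-1})A_{G^{(\ell)}}$ of Robinson matrices (the coefficients $\alpha_\ell-\alpha_{\ell-1}$ are positive since the $\alpha_\ell$ are strictly increasing, and $J$ is trivially Robinson), and the Robinson property is preserved under nonnegative linear combinations because the defining inequalities $A_{ik}\le\min\{A_{ij},A_{jk}\}$ are preserved under such combinations. The \emph{only if} direction is the step I expect to be slightly more delicate, and I would handle it by characterizing the level matrices directly through the data. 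The clean way is to observe that, for the fixed global ordering given by the identity permutation (since we are proving a statement about $A$ being Robinson, not merely Robinsonian), one has $(A_{G^{(\ell)}})_{xy}=1$ exactly when $A_{xy}\ge\alpha_\ell$, so the entries of $A_{G^{(\ell)}}$ are obtained from those of $A$ by the monotone nondecreasing threshold map $t\mapsto \mathbf{1}[t\ge\alpha_\ell]$.

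The key observation making the \emph{only if} direction work is that \emph{any coordinatewise monotone nondecreasing function applied entrywise to a Robinson matrix yields a Robinson matrix}. Concretely, suppose $A$ is Robinson and fix $1\le i\le j\le k\le n$; then $A_{ik}\le A_{ij}$ and $A_{ik}\le A_{jk}$, so applying the nondecreasing map $f_\ell(t)=\mathbf{1}[t\ge\alpha_\ell]$ preserves both inequalities, giving $(A_{G^{(\ell)}})_{ik}\le\min\{(A_{G^{(\ell)}})_{ij},(A_{G^{(\ell)}})_{jk}\}$. Hence each $A_{G^{(\ell)}}$ inherits the Robinson inequalities from $A$, which completes the proof. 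I would present this monotone-threshold argument as the crux, since it simultaneously explains why the level graphs are the correct similarity analogue of threshold graphs and why the decomposition interacts so cleanly with the Robinson property; everything else reduces to the telescoping computation above.
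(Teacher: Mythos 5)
Your proof is correct; the paper omits this proof entirely (``whose easy proof is omitted''), and your argument --- the telescoping computation for the identity, closure of the Robinson inequalities under nonnegative combinations for the ``if'' direction, and entrywise monotone thresholding $t\mapsto\mathbf{1}[t\ge\alpha_\ell]$ for the ``only if'' direction --- is exactly the kind of argument the authors have in mind. The only point worth making explicit is that both the stated identity and your diagonal bookkeeping rely on the standing convention that all diagonal entries of $A$ equal the maximal value $\alpha_L$ (so that they match the all-ones diagonal of each extended adjacency matrix $A_{G^{(\ell)}}$); you state this assumption, and it is the same implicit convention under which the paper's formula holds.
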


Clearly, if $A$ is a Robinsonian matrix then the adjacency matrices of its level graphs $G^{(\ell)}$~($\ell\in~[L]$) are Robinsonian too.
However,  the converse  is not true: it is easy to build a small example where $A$ is not Robinsonian although the extended adjacency matrix of each of its level graphs is Robinsonian. 
The difficulty lies in the fact that one needs to find a permutation that reorders {\em simultaneously}  the extended adjacency matrices of all the level graphs as  Robinson matrices.
Roberts \cite{Roberts78} first introduced a characterization of Robinsonian matrices in terms of indifference graphs (i.e. unit  interval graphs). 
Rephrasing his result using the notion of level graphs, he showed that $A$ is Robinsonian if and only if its level graphs have vertex  linear orders that are compatible (see \cite[Theorem 4.4]{Roberts78}). However, he does not give any algorithmic insight on how to find such orders.

Combining the  links between binary Robinsonian matrices and unit interval graphs (Lemma~\ref{thm:Robinsonian matrices and 3-vertex condition}) and  between  reorderings of binary Robinsonian matrices and straight enumerations of unit interval graphs  (Theorem \ref{thm:Robinson reordering and straight enumeration}) together with the decomposition result of Lemma \ref{thm:level graphs}, we obtain the following characterization of Robinsonian matrices.

\begin{theorem}\label{thm: Robinsonian decomposition in uig}
Let $A\in \MS^n$ with level graphs $G^{(1)},\ldots,G^{(L)}$. Then:
\begin{itemize}
\item[(i)]  $A$ is a Robinsonian  matrix if and only if there exist straight enumerations of  $G^{(1)},$ $\ldots,$ $ G^{(L)}$ whose corresponding weak linear orders $\psi_1,\ldots,\psi_L$ are pairwise compatible.
\item[(ii)]  A linear order $\pi$ of $V$ reorders $A$ as a Robinson matrix if and only if there exist pairwise compatible straight enumerations of $G^{(1)},\ldots, G^{(L)}$, whose corresponding common refinement is compatible with~$\pi$.
\end{itemize}
\end{theorem}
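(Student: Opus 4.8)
The plan is to assemble the claimed characterization directly from the three building blocks already established, so that the work reduces to carefully tracking how compatibility and common refinement interact with the decomposition of $A$ into its level graphs. For part \emph{(i)}, I would argue both implications through the "simultaneous reordering" reformulation. Suppose first that $A$ is Robinsonian, and let $\pi$ be a linear order of $V$ that reorders $A$ as a Robinson matrix. By Lemma~\ref{thm:level graphs}, $\pi$ simultaneously reorders each extended adjacency matrix $A_{G^{(\ell)}}$ as a Robinson matrix. Applying Theorem~\ref{thm:Robinson reordering and straight enumeration} to each level graph $G^{(\ell)}$ then produces, for every $\ell\in[L]$, a straight enumeration of $G^{(\ell)}$ whose associated weak linear order $\psi_\ell$ is compatible with the \emph{same} $\pi$. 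Since all the $\psi_\ell$ are compatible with one common linear order $\pi$, Lemma~\ref{thm: easy common refinement and pi} guarantees that they are pairwise compatible, which gives the forward direction.

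For the converse direction of \emph{(i)}, I would start from pairwise compatible straight enumerations with weak linear orders $\psi_1,\ldots,\psi_L$. By Lemma~\ref{thm: easy common refinement and pi}, pairwise compatibility yields a single linear order $\pi$ of $V$ compatible with every $\psi_\ell$. Feeding this $\pi$ back into Theorem~\ref{thm:Robinson reordering and straight enumeration} (now in the "if" direction) shows that $\pi$ reorders each $A_{G^{(\ell)}}$ as a Robinson matrix, and then Lemma~\ref{thm:level graphs} assembles these into the conclusion that $\pi$ reorders $A$ itself, so $A$ is Robinsonian. The main subtlety I would watch here is the precise correspondence in Theorem~\ref{thm:Robinson reordering and straight enumeration}: its compatibility condition~\eqref{eqequi} is phrased only for pairs $x\neq_\psi y$, so I must confirm that a single $\pi$ compatible with all $\psi_\ell$ in the sense of Lemma~\ref{thm: easy common refinement and pi} does indeed satisfy \eqref{eqequi} for each $\ell$ separately. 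This is the crux, but it is essentially the content of the definition of compatibility and should go through cleanly.

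For part \emph{(ii)}, the forward and backward implications both hinge on the last clause of Lemma~\ref{thm: easy common refinement and pi}, which states that a $\pi$ compatible with each $\psi_\ell$ is automatically compatible with their common refinement $\psi_1\wedge\cdots\wedge\psi_L$. If $\pi$ reorders $A$ as a Robinson matrix, I would invoke the argument of \emph{(i)} to extract pairwise compatible straight enumerations with each $\psi_\ell$ compatible with $\pi$, and then Lemma~\ref{thm: easy common refinement and pi} upgrades this to compatibility of $\pi$ with the common refinement. Conversely, if $\pi$ is compatible with the common refinement of some pairwise compatible straight enumerations, then $\pi$ is in particular compatible with each individual $\psi_\ell$ (since the common refinement refines each $\psi_\ell$), and the "if" direction of \emph{(i)} then shows $\pi$ reorders $A$ as a Robinson matrix.

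I expect the genuinely delicate point to be bookkeeping rather than any deep obstacle: the statement weaves together three separate equivalences (binary Robinsonian $\leftrightarrow$ unit interval graph, reordering $\leftrightarrow$ straight enumeration, and the level-graph decomposition), and the only real risk is a mismatch between the pointwise compatibility condition \eqref{eqequi} used in Theorem~\ref{thm:Robinson reordering and straight enumeration} and the order-theoretic compatibility of Lemma~\ref{thm: easy common refinement and pi}. I would therefore state once, at the outset, that these two notions of "$\pi$ compatible with $\psi$" coincide, and then the rest of the proof is a direct chaining of the cited results with no further computation required.
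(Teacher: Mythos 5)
Your proposal is correct and follows essentially the same route as the paper: it chains Lemma~\ref{thm:level graphs}, Theorem~\ref{thm:Robinson reordering and straight enumeration} and Lemma~\ref{thm: easy common refinement and pi} in exactly the way the authors do, the only (immaterial) difference being that you prove \emph{(i)} first and derive \emph{(ii)} from it, whereas the paper proves \emph{(ii)} and notes that \emph{(i)} follows. Your side remark that compatibility in the sense of~\eqref{eqequi} coincides with the order-theoretic notion used in Lemma~\ref{thm: easy common refinement and pi} is a valid and harmless clarification.
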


\begin{proof}
Observe first that if assertion \textit{(ii)} holds then \textit{(i)} follows directly using the result of Lemma \ref{thm: easy common refinement and pi}. We now prove \textit{(ii)}. Assume that $A$ is Robinsonian and let $\pi$ a linear order of $V$ that reorders $A$ as a Robinson matrix. Then $A_{\pi}$ is Robinson and thus, by lemma \ref{thm:level graphs}, each permuted matrix $(A_{G^{(\ell)}})_\pi$ is a Robinson matrix. 
Then, applying Theorem \ref{thm:Robinson reordering and straight enumeration}, for each $\ell \in [L]$, there exists a straight enumeration of $G^{(\ell)}$ whose corresponding weak linear ordering $\psi_\ell$ is compatible with~$\pi$.  
We can thus conclude that the common refinement of $\psi_1,\ldots,\psi_L$ is compatible in view of Lemma \ref{thm: easy common refinement and pi}.
Conversely, assume that there exist straight enumerations of $G^{(1)},\ldots, G^{(L)}$ whose corresponding weak linear orders $\psi_1,\ldots,\psi_L$ are pairwise compatible with $\pi$ and their common refinement is compatible with $\pi$. Then, by Theorem \ref{thm:Robinson reordering and straight enumeration}, $\pi$  reorders simultaneously each $A_{G^{(\ell)}}$ as a Robinson matrix and thus $A_{\pi}$ is Robinson, which shows that $A$ is Robinsonian. 
\end{proof}

\section{The algorithm}\label{secalgmain}
We describe here our algorithm for recognizing whether a given symmetric nonnegative matrix $A$ is Robinsonian. 
First, we introduce an algorithm which either returns a permutation reordering $A$ as a Robinson matrix or states that~$A$ is not a Robinsonian matrix. 
Then, we show how to modify it in order to return all the permutations reordering $A$ as a Robinson matrix.

\subsection{Overview of the algorithm}\label{secalg}
The algorithm is based on Theorem \ref{thm: Robinsonian decomposition in uig}. 
The main idea is to find straight enumerations of the level graphs of $A$ that are pairwise compatible and to compute their common refinement. 
The matrix $A$ is not Robinsonian precisely when these objects cannot be found.
As above, $L$ denotes the number of distinct nonzero entries of $A$ and throughout $G^{(\ell)}=(V=[n],E_\ell)$ is the $\ell$-th level graph, whose edges are the pairs $\{x,y\}$ with $A_{xy}\ge \alpha_\ell$, for $\ell\in [L]$. 

One of the main  tasks in the algorithm is to find (if it exists) a straight enumeration of a graph $G$ which is compatible with a given weak linear order $\psi$ of $V$.
Roughly speaking, $G$ will correspond to a level graph $G^{(\ell)}$ of $A$ (in fact, to a connected component of it), while $\psi$ will correspond to the common refinement of the previous level graphs $G^{(1)},\dots,G^{(\ell-1)}$. 
Hence, looking for a straight enumeration of $G$ compatible with $\psi$ will correspond to looking for a straight enumeration of $G^{(\ell)}$ compatible with previously selected straight enumerations of the previous level graphs $G^{(1)},\dots,G^{(\ell-1)}$.

Since the straight enumerations of the level graphs might not be unique, it is important to choose, among all the possible straight enumerations, the ones that lead to a common refinement (if it exists). 

If $G$ is a connected unit interval graph, its straight enumeration $\phi$ is unique up to reversal (see Theorem~\ref{thm:uig and straight enumeration}). 
On the other hand, if $G$ is not connected then any possible ordering of the connected components induces a straight enumeration, obtained by concatenating straight enumerations of its connected components.
This freedom in choosing the straight enumerations of the components is crucial in order to return {\em all} the Robinson orderings of $A$, and it is taken care of in Section \ref{secpermutations} using PQ-trees.

As we will see in Section~4.1.4, the choice of a straight enumeration of $G$ compatible with $\psi$ reduces to correctly orient straight enumerations of the connected components of $G$.

There are three main subroutines in our algorithm: \textit{CO-Lex-BFS} (see Algorithm~\ref{alg:Lex-BFSweak}), a variation of  Lex-BFS, which  finds and orders the connected components of the level graphs; \textit{Straight\_enumeration} (see Algorithm~\ref{alg:straght_enumeration}), which computes the straight enumeration of a connected graph as in \cite{Corneil04}; \textit{Refine} (see Algorithm~\ref{alg:refine}), a variation of partition refinement, which finds the common refinement of two weak linear orders.
These subroutines are used in the recursive algorithm \textit{Robinson} (see Algorithm \ref{alg:Robinsonian recognition}).

\subsubsection{Component ordering}
Our first subroutine is  \textit{CO-Lex-BFS} (where CO stands for `Component Ordering') in Algorithm~\ref{alg:Lex-BFSweak}.
Given a graph $G=(V,E)$ and a weak linear order $\psi$ of $V$, it  detects the connected components of $G$ and  orders them in a compatible way with respect to $\psi$. According to Lemma \ref{thm:order connected components} below, this is possible if $G$ admits a straight enumeration compatible with $\psi$.
\begin{lemma}\label{thm:order connected components}
Consider a graph  $G=(V,E)$ and  a weak linear order $\psi$ of $V$. If $G$ has   a straight enumeration $\phi$   compatible with $\psi$ then there exists an ordering $V_1, \ldots,V_c$ of the connected components of $G$ which is compatible with $\psi$, i.e.,  such that $V_{1} \leq_{\psi} \ldots \le_\psi V_c$.
\end{lemma}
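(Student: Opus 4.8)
The plan is to separate the argument into a purely structural statement about the straight enumeration $\phi$ itself, and then a short deduction that brings in the compatibility with $\psi$. Write $\phi=(B_1,\ldots,B_p)$, so that the $B_i$ are the blocks of $G$. Since every block is a clique and every clique is connected, each block lies entirely inside one connected component; hence each connected component of $G$ is a union of blocks. The key structural claim I want to establish is that the blocks of any single connected component occur \emph{consecutively} in $\phi$. Granting this, the components appear in $\phi$ as the maximal runs of consecutive blocks, and labelling them $V_1,\ldots,V_c$ in their order of appearance along $\phi$ yields the candidate ordering.

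I expect this structural claim to be the main obstacle, and I would prove it by examining the \emph{break points} of $\phi$, meaning the indices $i$ for which $B_i$ and $B_{i+1}$ are not adjacent. The crux is to show that no edge of $G$ crosses a break point: if $i$ is a break point, then no block $B_j$ with $j\le i$ is adjacent to any block $B_k$ with $k\ge i+1$. Suppose such an adjacent pair existed. Applying the defining property of a straight enumeration to $B_j$ — that $B_j$ together with all blocks adjacent to it form a consecutive interval — this interval must contain both $B_j$ and $B_k$, and being an interval it then also contains $B_i$ and $B_{i+1}$. A short case analysis now forces $B_i$ and $B_{i+1}$ to be adjacent: if $j=i$, the straight-enumeration property for $B_i$ already places $B_{i+1}$ in its neighbourhood interval (which contains $B_k$); if $j<i$, then $B_j$ is adjacent to both $B_i$ and $B_{i+1}$, and applying the property to $B_{i+1}$ forces $B_i$ into its neighbourhood interval. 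Either way $B_i$ and $B_{i+1}$ are adjacent, contradicting that $i$ is a break point. Consequently the maximal block-runs delimited by break points have no edges between them, while inside each run consecutive blocks are adjacent; hence each run is a connected subgraph with no edges leaving it, that is, exactly a connected component. This proves the claim.

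Finally I would conclude using compatibility. With the components labelled $V_1,\ldots,V_c$ according to the order in which their block-runs appear in $\phi$, every element of $V_i$ precedes every element of $V_j$ in $\phi$ whenever $i<j$. Fix $x\in V_i$ and $y\in V_j$ with $i<j$; since they lie in different components they lie in different blocks, so $x<_\phi y$ strictly. Compatibility of $\phi$ with $\psi$ rules out $y<_\psi x$, leaving $x\le_\psi y$. As this holds for all such $x,y$, we obtain $V_i\le_\psi V_j$, and therefore $V_1\le_\psi\cdots\le_\psi V_c$, which is the required ordering.
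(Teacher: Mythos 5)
Your proof is correct and takes the same essential route as the paper's one-line argument: order the components as they appear in $\phi$ and then use compatibility to transfer that order to $\psi$. The only difference is that the paper simply asserts that $\phi$ induces an ordering of the components, whereas you explicitly prove the underlying structural fact --- that the blocks of each connected component occupy a consecutive run of $\phi$, via your correct break-point argument --- a fact the paper treats as known from its earlier remark that straight enumerations of a disconnected graph are exactly concatenations of straight enumerations of its components.
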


\begin{proof}
If  $V_1,\ldots,V_c$ is the ordering of the components of $G$ which is induced by the straight enumeration $\phi$, i.e., 
$V_1<_\phi\ldots<_\phi V_c$, then 
$V_1\le_\psi \ldots \le_\psi V_c$ as $\phi$ is compatible with $\psi$.
\end{proof}

\begin{algorithm}[!ht] 
\caption{\textit{CO-Lex-BFS}$(G,\psi)$}
\label{alg:Lex-BFSweak}
\SetKwInput {KwIn}{input}
\SetKwInput {KwOut}{output}
\KwIn{a graph $G=(V,E)$, a weak linear order $\psi=(B_1,\dots,B_p)$ of $V$}
\KwOut{a linear order $\sigma$ of $V$ and a linear order $(V_1,\dots,V_c)$ of the connected components of $G$ compatible with $\psi$ and $\sigma$,  or STOP (no such linear order of the components exists)}
\vspace{2ex}
mark all the vertices as unvisited\\
$\omega = 1$\\
$V_{\omega}, B_{\omega}^{\min}, B_{\omega}^{\max}=\emptyset$\\
Let $u$ be a vertex in $B_1$\\
$label(u)=|V|$\\
\ForEach{$v \in V \setminus u$}{
	$label(v)=\emptyset$
}
\For{$i=|V|,\dots,0$}{
	let $S$ be the set of unvisited vertices with lexicographically largest label \label{alg:sliceS}\\
	pick arbitrarily a  vertex $p$ in $S$ coming first in $\psi$ and mark it as visited\\
	$\sigma(p)= |V| - i + 1$ \\
	\If{$label(p)= \emptyset$ \textup{\textbf{or}} $i=0$}{
	   \If{ there exists a block $B$ of $\psi$ such that $B \nsubseteq V_{\omega}$ and $B_{\omega}^{\min} <_{\psi} B <_{\psi} B_{\omega}^{\max}$}{
	    \textbf{stop} \hfill (no ordering of components compatible with $\psi$ exists)
              }  
	    	       		\If{$\omega \geq 2$}{
	       			\eIf{$V_\omega\subseteq B^{\min}_{\omega-1}$} {\label{alg:line Q node}
						swap $V_{\omega}$ and $V_{\omega-1}$ and modify $\sigma$ accordingly\\
					}{
		  				\If{$B^{\min}_{\omega} <_\psi  B^{\max}_{\omega-1}$} {
		  					\textbf{stop} \hfill (no ordering of components compatible with $\psi$ exists)
		  				} 
					}
			}
		$\omega = \omega + 1$\\
		$V_{\omega} = \emptyset$\\	
	}
	$V_{\omega}=V_{\omega} \cup \{p\}$	\\
	$B_{\omega}^{\min}$ is the first block in $\psi$ which meets $V_{\omega}$\\
	$B_{\omega}^{\max}$ is the last block in $\psi$ which meets $V_{\omega}$\\
	\ForEach{unvisited vertex $w$ in $N(p)$}{
		append $i-1$ to $label(w)$\\	
	}

}
\Return $(V_1,\dots,V_c)$ and $\sigma$
\end{algorithm}

Algorithm~\ref{alg:Lex-BFSweak}  is based on the following observations. When the vertex $p$ in the set $S$ at line~\ref{alg:sliceS} (which represents the current set of unvisited vertices with a tie, known as {\em slice} in Lex-BFS) has label $\emptyset$, it means that $p$ is not contained in  the current component $V_{\omega}$, so a new component containing $p$ is opened.
Every time a connected component $V_\omega$ has been completed, 
we check if it can be ordered along the already detected components in a compatible way with~$\psi$.
We also do this for the last completed component $V_c$, at the last iteration $i=0$ at line 9 of Algorithm ~\ref{alg:Lex-BFSweak}.
Let $B_\omega^{\min}$ and~$B_\omega^{\max}$ denote respectively the first and the last blocks of $\psi$ intersecting $V_{\omega}$.
We distinguish two cases:
\begin{enumerate}
\item if $V_{\omega}$ meets more than one block of $\psi$ (i.e., if $B_\omega^{\min} <_\psi B_\omega^{\max}$), we check if all the inner blocks  between $B^{\min}_\omega$ and $B^{\max}_\omega$ are contained in $V_\omega$. 
If this is not  the case, then  the algorithm stops. 
Moreover the algorithm also stops if both $V_\omega$ and $V_{\omega -1}$ meet exactly the same  two blocks, i.e., $B^{\min}_\omega=B^{\min}_{\omega-1}$ and $B^{\max}_\omega=B^{\max}_{\omega-1}$. In both cases  it is indeed not possible to order the components in a compatible way with $\psi$.
\item if $V_{\omega}$ meets only one block $B_k$ of $\psi$ (i.e., $V_\omega \subseteq B_k$)  and if this block $B_k$ is the first block of  the previous connected component $V_{\omega-1}$ (i.e., $B_k=B_{\omega-1}^{\min}$), then we swap $V_{\omega-1}$ and $V_{\omega}$ in order to make the ordering of the components compatible with $\psi$. 
The ordering $\sigma$ is updated by setting, for each $v \in V_{\omega-1}$ its new  ordering as $\sigma(v)+|V_{\omega}|$ and for each $v \in V_{\omega}$ as $\sigma(v)-|V_{\omega-1}|$.
Observe that if we are in the case  when both $V_\omega$ and $V_{\omega-1}$ are contained in $B_k$, then we  do not need to do this swap, i.e., the two components $V_\omega$ and $V_{\omega-1}$ can be ordered arbitrarily. 
\end{enumerate}

The next lemma shows the correctness of Algorithm \ref{alg:Lex-BFSweak}.

\begin{lemma}\label{thm:connected components}
Let $G=(V,E)$ be a graph and let $\psi$ be a weak linear order of $V$.
\begin{enumerate}\item[(i)]
If Algorithm \ref{alg:Lex-BFSweak} successfully terminates then the returned order $V_1,\ldots,V_c$ of the components satisfies\\ $V_1\le_\psi~\ldots~\le_\psi~V_c$.
\item[(ii)]
If Algorithm \ref{alg:Lex-BFSweak} stops then no ordering of the components exists that is compatible with $\psi$.
\end{enumerate}
\end{lemma}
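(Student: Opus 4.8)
The plan is to prove the two assertions separately, with (i) being a relatively direct consequence of the swapping invariant maintained by the algorithm, and (ii) requiring a careful case analysis matching each of the two stopping conditions described before the lemma.

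For part (i), I would argue that the algorithm maintains the invariant that the partial order of already-completed components is compatible with $\psi$, i.e. after processing components $V_1,\dots,V_\omega$ we have $V_1\le_\psi\ldots\le_\psi V_\omega$. The key observation is that Lex-BFS explores $V$ by consecutive connected components (a standard property of BFS-type searches), and within each completed component the algorithm records the first and last blocks $B_\omega^{\min},B_\omega^{\max}$ of $\psi$ that it meets. When a new component $V_\omega$ is closed, the only situation in which the newly appended component could violate compatibility with its predecessor $V_{\omega-1}$ is when $V_\omega$ is entirely contained in the first block $B_{\omega-1}^{\min}$ of the previous component (so that $V_\omega$ should come before $V_{\omega-1}$); this is exactly the case handled by the swap at line~\ref{alg:line Q node}, and I would verify that after the swap the updated $\sigma$ and the updated ordering of the two components satisfy $V_{\omega-1}\le_\psi V_\omega$. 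Since no other reordering among earlier components is needed (they were already compatible and the new component is appended at the end or swapped with its immediate predecessor), the invariant is preserved, and at termination $V_1\le_\psi\ldots\le_\psi V_c$ as claimed.

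For part (ii), I would show the contrapositive structure: each of the two \textbf{stop} instructions in Algorithm~\ref{alg:Lex-BFSweak} is triggered only in a situation that genuinely obstructs any compatible ordering. The first stop fires when some block $B$ of $\psi$ satisfies $B\nsubseteq V_\omega$ and $B_\omega^{\min}<_\psi B<_\psi B_\omega^{\max}$; here a component $V_\omega$ meets two blocks of $\psi$ but an intermediate block $B$ contains vertices lying outside $V_\omega$. I would argue that in any ordering of components compatible with $\psi$, the component $V_\omega$ would have to occupy an interval of $\psi$ stretching from $B_\omega^{\min}$ to $B_\omega^{\max}$, forcing the vertices of $B$ outside $V_\omega$ to be interleaved with those of $V_\omega$ in $\psi$ --- contradicting that distinct components are ordered as $\le_\psi$-blocks. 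The second stop fires when $\omega\ge 2$, $V_\omega\nsubseteq B_{\omega-1}^{\min}$, yet $B_\omega^{\min}<_\psi B_{\omega-1}^{\max}$: this says the two consecutive components $V_\omega$ and $V_{\omega-1}$ straddle overlapping ranges of $\psi$ that cannot be separated by placing one entirely before the other, again precluding any compatible ordering. Invoking Lemma~\ref{thm:order connected components}, the nonexistence of a compatible ordering of the components means $G$ has no straight enumeration compatible with $\psi$, which is the desired conclusion.

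The main obstacle I anticipate is making the overlap argument in part (ii) fully rigorous, in particular cleanly characterizing the $\le_\psi$-ranges $[B_\omega^{\min},B_\omega^{\max}]$ of the components and showing that compatibility of an ordering with $\psi$ forces these ranges to be pairwise \emph{nested-or-disjoint in a linearizable way}. The subtlety is that two components may legitimately share a single block of $\psi$ (the case explicitly excluded from the second stop by the condition $V_\omega\nsubseteq B_{\omega-1}^{\min}$ together with the handling of the case $B_\omega^{\min}=B_{\omega-1}^{\min}=B_\omega^{\max}=B_{\omega-1}^{\max}$), so I must argue precisely why sharing a \emph{single} block is harmless while the straddling configurations detected by the stops are fatal. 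I would isolate this as a small combinatorial claim about when a family of vertex-disjoint sets, each assigned a $\psi$-interval, admits a linear arrangement respecting $\psi$, and then check that the two stopping conditions are exactly the violations of that claim.
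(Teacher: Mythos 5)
Your overall structure (prove (i) via the swap logic, prove (ii) by showing each \textbf{stop} is a genuine obstruction) matches the paper's, and your treatment of the first stopping condition in part (ii) is essentially the paper's argument. However, there is a genuine gap running through both halves: you never invoke the tie-breaking rule of Algorithm \ref{alg:Lex-BFSweak} (``pick a vertex in $S$ coming first in $\psi$''), which is what actually ties the search to $\psi$. Its consequence is that the first vertex opened in each new component is $\psi$-minimal among all still-unvisited vertices, so components are discovered in an order with $B_1^{\min}\le_\psi B_2^{\min}\le_\psi\cdots$. In part (i) you simply assert that ``the only situation in which the newly appended component could violate compatibility with its predecessor is when $V_\omega\subseteq B_{\omega-1}^{\min}$''; that assertion is the whole content of (i), and proving it requires exactly this minimality property together with the fact that the stop conditions did not fire for $V_{\omega-1}$. (The paper does this by contradiction: take $x\in V_{\omega-1}$, $y\in V_\omega$ with $y<_\psi x$, compare $y$ with the first vertex $z$ selected in $V_{\omega-1}$, and show that in every case a stop or a swap would have occurred.)

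In part (ii) the gap is fatal to your proposed repair. For the second stop ($V_\omega\not\subseteq B_{\omega-1}^{\min}$ and $B_\omega^{\min}<_\psi B_{\omega-1}^{\max}$), the static interval data $[B_\omega^{\min},B_\omega^{\max}]$ alone does \emph{not} preclude a compatible ordering: if, hypothetically, $V_\omega$ lay entirely in blocks preceding $B_{\omega-1}^{\min}$, the stop condition as written would be satisfied and yet $V_\omega<_\psi V_{\omega-1}$ would be a perfectly compatible ordering. This configuration is excluded only because the search opens each component at a $\psi$-first unvisited vertex (giving $B_{\omega-1}^{\min}\le_\psi B_\omega^{\min}$) and because the first stop did not already fire on $V_{\omega-1}$ (forcing $B_{\omega-1}^{\min}=B_\omega^{\min}$, whence $\psi[V_\omega]$ having at least two blocks yields $z\in B_\omega^{\max}$ with $x<_\psi z$, killing the order $V_\omega<_\psi V_{\omega-1}$ as well). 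Consequently, the ``small combinatorial claim about families of vertex-disjoint sets with assigned $\psi$-intervals'' that you plan to isolate is false as a purely static statement; the correctness argument must be interleaved with the dynamics of the search, as in the paper's proof.
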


\begin{proof}
\textit{(i)} Assume first that Algorithm \ref{alg:Lex-BFSweak} successfully terminates and returns the linear ordering $V_1,\ldots,V_c$ 
of the components.
Suppose for contradiction that
$V_{\omega-1}\not\le_\psi V_\omega$ for some $\omega\in [c]$ with $\omega\ge 2$.
Then there exist $x\in V_{\omega-1}$ and $y\in V_\omega$ such that $y<_\psi x$.
Let $z$ be the first vertex selected in the component $V_{\omega-1}$. Then, $z\le_\psi y$ (for if not the algorithm would have selected $y$ before $z$ when opening the component $V_{\omega-1}$).
Let $\psi = (B_1, \dots, B_p)$ and denote by $B_{\omega}^{min}$ and $B_{\omega}^{max}$, respectively, the first and last blocks of $\psi$ meeting $V_{\omega}$ ($B_{\omega-1}^{min}$ and $B_{\omega-1}^{max}$ are analogously defined).
Say, $x\in B_j$, $y\in B_i$ so that $i<j$, and $z\in B_r$. As $z\le_\psi y$, we have $B_r \le _\psi B_i$.
Suppose first that $B_r<_\psi B_i$. Then, $B_i$ is an inner block between $B^{\min}_{\omega-1}$ and $B^{\max}_{\omega-1}$ which is not contained in $V_{\omega-1}$ (since $y\in B_i$), yielding a contradiction since the algorithm would have stopped when dealing with the component $V_{\omega-1}$.
Suppose now that $B_r=B_i$. If $\psi[V_\omega]$ has only one block $B$, then  $B\subseteq B_i=B^{\min}_{\omega-1}$ and then the algorithm would have swapped $V_\omega$ and $V_{\omega-1}$.
Hence $\psi[V_\omega]$ has at least two blocks and
$B^{\min}_\omega \le_\psi B_i <_\psi B_j \le_\psi B^{\max}_{\omega-1}$, which is again a contradiction since the algorithm would have stopped. \\
\textit{(ii)} Assume now that the algorithm stops after the completion of 
 the component $V_\omega$.
Then $\psi[V_\omega]$  has at least two blocks. Suppose first that the algorithm stops because $B^{\min}_\omega<_\psi B^{\max}_{\omega-1}$. Then clearly one cannot have $V_{\omega-1}<_\psi V_\omega$. We show that we also cannot have $V_\omega <_\psi V_{\omega-1}$. For this assume for contradiction that $V_\omega <_\psi V_{\omega-1}$. 
Let $y$ be the first selected vertex in $V_\omega$ and let $x$ be the first vertex selected in $V_{\omega-1}$.
Then, $y\in B^{\min}_\omega$, $x\le_\psi y$ (for if not the algorithm would have considered the component $V_\omega$ before $V_{\omega-1}$), and thus 
$B^{\min}_{\omega-1} \le_\psi B^{\min}_{\omega}$.
If $B^{\min}_{\omega-1} <_\psi B^{\min}_{\omega}$ then the algorithm would have stopped earlier when examining $V_{\omega-1}$, since $B^{\min}_{\omega-1} <_\psi B^{\min}_\omega <_\psi B^{\max}_{\omega-1}$ and $B^{\min}_\omega\not\subseteq V_{\omega-1}$.
Hence, we have $B^{\min}_{\omega-1} = B^{\min}_{\omega}$ and, as $\psi[V_\omega]$ has at least two blocks, 
there exists a vertex $z\in B^{\max}_\omega$ such that $x<_\psi z$, which contradicts $V_\omega <_\psi V_{\omega-1}$.
Suppose now that the algorithm stops because 
$B^{\min}_\omega<_\psi B<_\psi B^{\max}_\omega$ and $B\not\subseteq V_\omega$. Let $x\in B^{\min}_\omega, y\in B^{\max}_\omega$ and $z\in B\setminus V_{\omega}$, and say $z\in V_{\omega'}$. Then we cannot have $V_{\omega'}<_\psi V_\omega$ since $x<_\psi z$, and we also cannot have $V_\omega<_\psi V_{\omega'}$ since $z<_\psi y$. 
Hence the two components $V_\omega$ and $V_{\omega'}$ cannot be ordered compatibly with $\psi$ and this  concludes the proof.
\end{proof}

\subsubsection{Straight enumerations}\label{secenu}
Once the connected components of $G$ are ordered, we  need to compute a straight enumeration of each connected component $G[V_\omega]$.
We do this with the  routine \textit{Straight\_enumeration} appplied to ($G[V_\omega],\sigma_\omega$), where $\sigma_\omega$ is a suitable given order of $V_\omega$ (namely, $\sigma_\omega=\sigma[V_\omega]$, where $\sigma$ is the vertex order returned by \textit{CO-Lex-BFS}$(G,\psi)$).
This routine is essentially the 3-sweep unit interval graph recognition algorithm of Corneil \cite{Corneil04} which, briefly, computes three times a Lex-BFS (each is named a \textit{sweep}) and uses the vertex ordering coming from the previous sweep to break ties in the search for the next sweep.
The only difference of \textit{Straight\_enumeration}$(G[V_\omega],\sigma_\omega)$ with respect to Corneil's algorithm is that we save the first sweep, because we use the order $\sigma_\omega$ returned by \textit{CO-Lex-BFS}.
We now describe the routine \textit{Straight\_enumeration} which is based on the algorithms of  \cite[\S 3]{Corneil95} and  \cite[\S 2]{Corneil04}. Below, $\deg_G(v)$ denotes the degree of the vertex $v$ in $G$.

\begin{algorithm}[!ht] 
\caption{\textit{Straight\_enumeration}$(G,\sigma)$}
\label{alg:straght_enumeration}
\SetKwInput {KwIn}{input}
\SetKwInput {KwOut}{output}
\KwIn{a connected graph $G=(V,E)$ and a linear order $\sigma$ of $V$}
\KwOut{a straight enumeration $\phi$ of $G$, or STOP ($G$ is not a unit interval graph)}
\vspace{2ex}
$\sigma^+$ = \textit{Lex-BFS+}($G,\sigma$)\\
$\sigma^{++}$ = \textit{Lex-BFS+}($G,\sigma^+$)\\
$i = 0$ \hfill (index of the blocks of $\psi$)\\
$L = R = 0$ \hfill (dummy variables to record the current block $B_i$)\\
\For {$v = 1, \dots,|V|$}{
	$lmn(v) = \min \{u: u \in N[v]\}$ \hfill (leftmost vertex adjacent to $v$) \\
	$rmn(v) = \max \{u: u \in N[v]\}$ \hfill (rightmost vertex adjacent to $v$) \\
	\eIf{$rmn(v)-lmn(v) \neq \deg_G(v)$}{
		\textbf{stop } \hfill ($G$ is not a unit interval graph)
		}{
		\eIf {$lmn(v)=L$ \textup{\textbf{and}} $rmn(v)=R$}{
			$C_i = C_i \cup \{v\}.$	
		}{
			$L = lmn(v)$\\	
			$R = rmn(v)$\\
			$i = i +1$\\
			$C_i = \{v\}$	
		}
	}
}
\Return $\phi=(C_1,\dots,C_q)$ 
\end{algorithm}

Basically, after the last sweep of Lex-BFS, for each vertex $v$ we define the leftmost vertex $lmn(v)$ and the rightmost vertex $rmn(v)$, according to $\sigma^{++}$, that are adjacent to $v$.
Checking whether $rmn(v)-lmn(v) = \deg_G(v)$ corresponds exactly to checking whether   the neighborhood condition holds for node $v$.
The vertices with the same leftmost and rightmost vertex are then indistinguishable vertices, and they form a block of $G$. The order of the blocks  follows the vertex order $\sigma^{++}$.

\subsubsection{Refinement of weak linear orders}
Given two weak linear orders $\psi$ and $\phi$ on~$V$, our second  subroutine \textit{Refine} in Algorithm \ref{alg:refine} computes their common refinement $\Phi=\psi \wedge \phi$ (if it exists).

\begin{algorithm}[!ht] 
\caption{\textit{\textit{Refine}}$(\psi,\phi)$}
\label{alg:refine}
\SetKwInput {KwIn}{input}
\SetKwInput {KwOut}{output}
\KwIn{two weak linear orders $\psi=(B_1,\dots,B_p)$ and $\phi=(C_1,\dots,C_q)$ of $V$}
\KwOut{their common refinement $\Phi=\psi\wedge \phi$, or $\Phi=\emptyset$ ($\psi$ and $\phi$ are not compatible)}
\vspace{2ex}
$B^{\max}$ is the last block of $\psi$ meeting $C_1$\\
$\Phi = \emptyset$\\
\eIf {there exists a block $B$ of $\psi$ such that $B<_\psi B^{\max}$ and $B\not\subseteq C_1$}{
	\Return $\emptyset$ \hfill ($\psi$ and $\phi$ are not compatible) \label{alg:Refine line 1}}
	{$W = V \setminus C_1$\\
	\eIf{$W = \emptyset$}{
			$\Phi=(\psi[C_1])$
			}{			
				$\Phi=(\psi[C_1],\textit{Refine}(\psi[W], \phi[W])$)
		}	
	}
\eIf {$\Phi$ is a weak linear order of $V$}{
	\Return $\Phi$} 
	{\Return $\emptyset$ \hfill ($\psi$ and $\phi$ are not compatible) \label{alg:Refine line 2}}
\end{algorithm}

We show the correctness of Algorithm  \ref {alg:refine}.

\begin{lemma}\label{thm:compatible refine}
If Algorithm~\ref {alg:refine} returns a weak linear order $\Phi$ of $V$, then $\Phi$ is the common refinement of $\psi$ and $\phi$.
If Algorithm~\ref {alg:refine} returns $\Phi = \emptyset$, then~$\psi$ and~$\phi$ are not compatible.
\end{lemma}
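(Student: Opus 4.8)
The plan is to prove both assertions simultaneously by induction on the number $q$ of blocks of $\phi$, mirroring the recursive structure of Algorithm~\ref{alg:refine}. In the base case $q=1$ we have $C_1=V$ and $W=\emptyset$, so the algorithm returns $\Phi=\psi[C_1]=\psi$; since $\phi$ imposes no strict relations, $\psi\wedge\phi=\psi$ and $\psi,\phi$ are trivially compatible, so both assertions hold.

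For the inductive step ($q\ge 2$, hence $W\neq\emptyset$), the first thing I would establish is that the test in step~3 detects exactly a compatibility obstruction concentrated on $C_1$. If the test succeeds, i.e.\ there is a $\psi$-block $B<_\psi B^{\max}$ with $B\not\subseteq C_1$, then choosing $y\in B\setminus C_1$ and $x\in C_1\cap B^{\max}$ (the latter is nonempty since $B^{\max}$ meets $C_1$) gives $x<_\phi y$ (as $x$ lies in the first $\phi$-block while $y$ does not) together with $y<_\psi x$; hence $\psi$ and $\phi$ are not compatible and the algorithm correctly returns $\emptyset$. Conversely, if the test fails, then every $\psi$-block strictly preceding $B^{\max}$ is contained in $C_1$, so every element of $C_1$ lies in a $\psi$-block at most $B^{\max}$ and every element of $W$ in a $\psi$-block at least $B^{\max}$, yielding $x\le_\psi y$ for all $x\in C_1$, $y\in W$, that is $C_1\le_\psi W$.

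Granting this, I would prove the decomposition $\psi\wedge\phi=(\psi[C_1],\,\psi[W]\wedge\phi[W])$. For $x\in C_1$ and $y\in W$ we have $x<_\phi y$ and $x\le_\psi y$, whence $x<_{\psi\wedge\phi}y$, so $C_1$ entirely precedes $W$ in the refinement. Restricting $\psi\wedge\phi$ to $C_1$ gives $\psi[C_1]$ (since $\phi[C_1]$ is trivial), while a direct check of the definition of common refinement shows $(\psi\wedge\phi)[W]=\psi[W]\wedge\phi[W]$. Now the induction hypothesis applies to the recursive call on $(\psi[W],\phi[W])$, which has $q-1$ blocks: if it returns $\psi[W]\wedge\phi[W]$, then $\Phi=(\psi[C_1],\psi[W]\wedge\phi[W])$ is a weak linear order of $V$, it passes the check in step~5, and it equals $\psi\wedge\phi$; if instead it returns $\emptyset$, then $\psi[W],\phi[W]$ are incompatible, so a violating pair in $W$ witnesses incompatibility of $\psi,\phi$, while $\Phi=(\psi[C_1],\emptyset)$ fails to order all of $V$ and is rejected in step~5, so the algorithm returns $\emptyset$.

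I expect the main obstacle to be the bookkeeping around this decomposition, specifically verifying that restriction commutes with common refinement, $(\psi\wedge\phi)[W]=\psi[W]\wedge\phi[W]$, and that the compatibility guaranteed by a failed step~3 test is exactly what forces $C_1$ to precede $W$ in $\psi\wedge\phi$; the remaining arguments are a routine unwinding of the recursion.
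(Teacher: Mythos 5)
Your proof is correct and follows essentially the same route as the paper: induction on the number $q$ of blocks of $\phi$, the same witness pair $x\in C_1\cap B^{\max}$, $y\in B\setminus C_1$ for the incompatibility direction, and the same reduction of the remaining cases to the recursive call on $(\psi[W],\phi[W])$. The only cosmetic difference is that you package the positive direction as the identity $\psi\wedge\phi=(\psi[C_1],\,\psi[W]\wedge\phi[W])$, whereas the paper verifies the defining biconditionals of the common refinement directly for the three types of pairs; the content is the same.
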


\begin{proof}
The proof is by induction on the number $q$ of blocks of $\phi$.
If $q=1$ then $\phi=(V)$ is clearly compatible with $\psi$ and the algorithm returns $\Phi=\psi$ as desired. Assume now $q\ge 2$.
Let $W=V\setminus C_1$. Then we can apply the induction assumption to $\psi[W]$ and $\phi[W]$ (which has $q-1$ blocks).

Assume first that the algorithm returns $\Phi$ which is a weak linear order of $V$. We show that $\Phi=\psi\wedge \phi$, i.e., that the following holds for all  $x,y\in V$:
\begin{equation}\label{eq:Phi}
\begin{array}{l}
x=_\Phi y \Longleftrightarrow x=_\psi y \text{ and } x=_\phi y,\\
x<_\Phi y \Longleftrightarrow x\le_\psi y \text{ and } x\le_\phi y \text{ with at least one strict inequality}.
\end{array}
\end{equation}
If $x,y\in C_1$ then $x=_\phi y$ and (\ref{eq:Phi}) holds since $\Phi[C_1]=\psi[C_1]$.
If $x,y\in V\setminus C_1$, then~(\ref{eq:Phi}) holds by the induction assumption.
Suppose now $x\in C_1$ and $y\in V\setminus C_1$. Then $x<_\phi y$ and $x<_\Phi y$. 
We show that $x\le_\psi y$ holds.
For this let $B_i$ (resp.,~$B_j$) be the block of $\psi$ containing $x$ (resp., $y$).
Then $B_i\leq_\psi B^{\max}$ since $B_1$ meets~$C_1$ as $x\in C_1$. Moreover, $B^{\max}\le_\psi B_j$, which implies  $x\le_\psi y$. 
Indeed, if one would have $B_j<_\psi B^{\max}$, then we would have $\Phi = \emptyset$ (line \ref{alg:Refine line 1} in Algorithm~\ref{alg:refine}), since $B_j\not\subseteq C_1$ as $y\in B_j\setminus C_1$, and thus $\Phi$ would not be a weak linear order of~$V$.

Assume now that the returned $\Phi$ is not a weak linear order of $V$.
If $\Phi = \emptyset$ (line~\ref{alg:Refine line 1} in Algorithm~\ref{alg:refine}), then there is a block $B<_\psi B^{\max}$ such that $B\not\subseteq C_1$, and we can pick elements $x\in B\setminus C_1$ and $y\in C_1\cap B^{\max}$ so that  $y<_\phi x$ and $x<_\psi y$, which shows that $\psi$ and $\phi$ are not compatible.
If $\Phi$ is a weak linear order of a subset $U \subset V$ (line~\ref{alg:Refine line 2} in Algorithm~\ref{alg:refine}), then it means that the weak linear order returned by the recursive routine \textit{Refine}$(\psi[W], \phi[W])$ is not a weak linear order of $W$ (but of a subset) and thus, by the induction assumption, $\psi[W]$ and~$ \phi[W]$ are not compatible and thus $\psi$ and $\phi$ neither. This concludes the proof.
\end{proof}

\subsubsection{Main algorithm}

We can now describe our main algorithm \textit{Robinson}$(A,\psi)$.  Given a nonnegative matrix $A\in \mathcal S^n$ and a weak linear order $\psi$ of $V=[n]$, it either returns  a weak linear order $\Phi$ of $V$ compatible with $\psi$ and with straight enumerations of the level graphs of $A$, or it indicates that such $\Phi$ does not exist.
The idea behind the algorithm is the following. 
We use the subroutines \textit{CO-Lex-BFS} and \textit{Straight\_enumeration} to order the components and compute the straight enumerations of the level graphs  of $A$, and we refine them using the subroutine \textit{Refine}. 
However, instead of refining the level graphs one by one on the full set $V$, we use a recursive algorithm based on a divide-and-conquer strategy, which refines smaller and smaller subgraphs of  the level graphs obtained by restricting to the connected components and thus working independently with the corresponding principal submatrices of $A$.
In this way we work with smaller subproblems and one may also skip some level graphs (as some principal submatrices of $A$ may have fewer distinct nonzero entries). 
This recursive algorithm is Algorithm  \ref{alg:Robinsonian recognition} below.

\begin{algorithm}[!ht] 
\caption{\textit{\textit{Robinson}}$(A,\psi)$}
\label{alg:Robinsonian recognition}
\SetKwInput {KwIn}{input}
\SetKwInput {KwOut}{output}
\KwIn{a nonnegative matrix~$A~\in \mathcal{S}^{n}$ and a weak linear order $\psi$ of $V=[n]$}
\KwOut{a weak linear order $\Phi$ compatible with $\psi$ and with straight enumerations of all the level graphs of $A$, or STOP (such an order $\Phi$ does not exist)}
\vspace{2ex}
$G$ is the support of A\\
\textit{CO-Lex-BFS}$(G,\psi)$ returns a linear order $(V_1,\dots,V_c)$ of the connected components of $G$ compatible with $\psi$ (if it exists) and a vertex order $\sigma$\\
$\Phi = \emptyset$\\
\For {$\omega=1,\dots,c$}{
	$\phi_{\omega}$ = \textit{Straight\_enumeration}($G[V_{\omega}],\sigma[V_\omega]$) \hfill (if $G[V_{\omega}]$ is a unit int. graph)\\
	\If {$\Phi_{\omega}$ = \textit{Refine}$(\psi[V_{\omega}],\phi_{\omega}) = \emptyset$}
	{
		\If {$\Phi_{\omega}$ = \textit{Refine}$(\psi[V_{\omega}],\overline{\phi}_{\omega}) = \emptyset$}
		{
		\textbf{stop} \hfill (no straight enumeration compatible with $\psi[V_{\omega}]$ exists) \label{alg:Rob line}
		}
	}
	$a'_{\min}$ is the smallest nonzero entry of $A[V_{\omega}]$\\
	$A'[V_{\omega}]$ is obtained from $A[V_{\omega}]$ by setting entries with value $a'_{\min}$ to zero\\
	\eIf{$A'[V_{\omega}]$ is diagonal}{
		$\Phi = (\Phi,\Phi_{\omega})$\\				
	}{
		$\Phi = (\Phi,Robinson(A'[V_{\omega}],\Phi_{\omega}))$ \label{alg:Rob line 2}\\
	}
}
\Return: $\Phi$ 
\end{algorithm}

\medskip
The algorithm \textit{Robinson}$(A,\psi)$ works  as follows. We are given as input a symmetric nonnegative matrix $A \in \MS^n$ and a weak linear order $\psi$ of $V=[n]$. 
Let $G$ be the support of $A$. First, we find the connected components of $G$ and we order them in a compatible way with $\psi$. If this is not possible, then we stop as there do not exist straight enumerations of the level graphs of $A$ compatible with $\psi$ (Theorem~\ref{thm:connected components}).
Otherwise, we initialize the weak linear order $\Phi$, which at the end of the algorithm will represent a common refinement of the straight enumerations of the level graphs of $A$.
In order to find~$\Phi$, we divide the problem over the connected components of $G$. The idea is then to work independently on each connected component $V_{\omega}$ and to find its (unique up to reversal) straight enumeration $\phi_{\omega}$ and the common refinement $\Phi_{\omega}$ of $\psi[V_{\omega}]$ and~$\phi_\omega$.

For each component $V_{\omega}$, we compute the straight enumeration $\phi_{\omega}$ of $G[V_\omega]$ if it exists, else we stop (Theorem~\ref{thm:uig and straight enumeration}). 
Since $\phi_{\omega}$ is unique up to reversal, we check if either $\phi_{\omega}$ or $\overline{\phi}_{\omega}$ is compatible with $\psi[V_{\omega}]$.
Specifically, we first compute the common refinement $\Phi_{\omega}$ of $\psi[V_{\omega}]$ and $\phi_{\omega}$. 
If it is nonempty we continue (Lemma~\ref{thm:compatible refine}), while if it is is empty we compute the common refinement $\Phi_{\omega}$ of $\psi[V_{\omega}]$ and $\overline{\phi}_{\omega}$.
If such a common refinement is nonempty we continue~(Lemma~\ref{thm:compatible refine}), 
while if it is again empty this time we stop, as no straight enumeration of $G_{\omega}$ compatible with $\psi[V_{\omega}]$ exists.
Finally, we set to zero the smallest nonzero entries of $A[V_{\omega}]$, obtaining the new matrix $A'[V_{\omega}]$ (whose nonzero entries take fewer distinct values than the matrix $A[V_\omega]$).
Now, if the matrix $A'[V_{\omega}]$ is diagonal, then we concatenate $\Phi_{\omega}$ after $\Phi_{\omega-1}$ in $\Phi$. 
Otherwise, we make a recursive call, where the input of the recursive routine is the matrix $A'[V_{\omega}]$ and $\Phi_{\omega}$. 
If the algorithm successfully terminates, then the concatenation $(\phi_1,\dots,\phi_c)$ will represent a straight enumeration of $G$, and $\Phi=(\Phi_1,\dots,\Phi_c)$ will represent the common refinement of this straight enumeration with the given weak linear order $\psi$ and with the  level graphs of $A$.

The final algorithm is Algorithm \ref{alg:Robinsonian_main} below. 

\begin{algorithm}[!ht]
\caption{\textit{\textit{Robinsonian}}$(A)$}
\label{alg:Robinsonian_main}
\SetKwInput {KwIn}{input}
\SetKwInput {KwOut}{output}
\KwIn{a nonnegative  matrix $A \in \mathcal{S}^{n}$}
\KwOut{a permutation $\pi$ such that $A_{\pi}$ is Robinson or stating that $A$ is not Robinsonian}
\vspace{2ex}
$\psi=(V)$\\
\eIf{\textit{Robinson}($A,\psi$) stops}{
		``A is NOT Robinsonian"		
	}{	
		$\Phi$=\textit{Robinson}($A,\psi$)\\
		\Return: a linear order $\pi$ of $V$ compatible with $\Phi$;
}
\end{algorithm}

Roughly speaking, every time we make a recursive call, we are basically passing to the next level graph of $A$.  
Hence, each recursive call can be visualized as the node of a recursion tree, whose root is defined by the first recursion in Algorithm \ref{alg:Robinsonian_main}, and whose leaves (i.e. the pruned nodes) are the subproblems whose corresponding submatrices are diagonal.

The correctness of Algorithm \ref{alg:Robinsonian_main} follows directly from the correctness of 
Algorithm \ref{alg:Robinsonian recognition}, which is shown by the next theorem.
Indeed, assume that Algorithm \ref{alg:Robinsonian recognition} is correct. Then, if Algorithm \ref{alg:Robinsonian_main} terminates then it computes a weak linear order $\Phi$ 
compatible with straight enumerations of the level graphs of $A$ and thus the returned order $\pi$ orders $A$ as a Robinson matrix in view of Theorem \ref{thm: Robinsonian decomposition in uig} \textit{(ii)}. On the other hand, if Algorithm \ref{alg:Robinsonian_main} stops then Algorithm \ref{alg:Robinsonian recognition} stops with the input $(A,\psi=(V))$. Then no weak linear order $\Phi$ exists which is compatible with straight enumerations of the level graphs of $A$ and thus, in view of 
Theorem~\ref{thm: Robinsonian decomposition in uig} \textit{(i)}, $A$ is not Robinsonian.

\begin{theorem} \label{thm:correctness}
Consider a weak linear order $\psi$ of $V=[n]$ and a nonnegative matrix  $A\in \mathcal S^n$ ordered compatibly with~ $\psi$.
\begin{itemize}
\item[(i)] If Algorithm \ref{alg:Robinsonian recognition}  terminates,  then there exist straight enumerations $\phi^{(1)},\ldots, \phi^{(L)}$ of the level graphs $G^{(1)},\ldots,G^{(L)}$ of $A$ such that the  returned weak linear order $\Phi$ is compatible with each of them and with $\psi$.
\item[(ii)]
If Algorithm \ref{alg:Robinsonian recognition} stops then there do not exist straight enumerations of the level graphs of $A$ that are pairwise compatible and compatible with $\psi$.
\end{itemize}
\end{theorem}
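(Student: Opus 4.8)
The plan is to prove statements \textit{(i)} and \textit{(ii)} simultaneously by induction on the number $L$ of distinct nonzero values of $A$, which also bounds the recursion depth since each recursive call on a submatrix $A'[V_\omega]$ strictly lowers the number of distinct nonzero values. The structural fact driving the induction is a correspondence between level graphs: writing $a'_{\min}(\omega)$ for the smallest nonzero entry of $A[V_\omega]$, one has $G^{(\ell)}[V_\omega]=G[V_\omega]$ (the connected restricted support) for every threshold $\alpha_\ell\le a'_{\min}(\omega)$, whereas for $\alpha_\ell>a'_{\min}(\omega)$ the graph $G^{(\ell)}[V_\omega]$ is exactly a level graph of $A'[V_\omega]$. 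Thus the straight enumeration $\phi_\omega$ computed by \textit{Straight\_enumeration} (unique up to reversal by Theorem~\ref{thm:uig and straight enumeration}) handles all low thresholds on $V_\omega$ at once, while the recursive call $\textit{Robinson}(A'[V_\omega],\cdot)$ handles the remaining ones. In the base case $L=1$ there is no recursion, and the two claims reduce to the correctness of \textit{CO-Lex-BFS}, \textit{Straight\_enumeration} and \textit{Refine}, i.e.\ to Lemmas~\ref{thm:connected components} and~\ref{thm:compatible refine} and Theorem~\ref{thm:uig and straight enumeration}.

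For part~\textit{(i)}, assuming termination, I would build each global straight enumeration $\phi^{(\ell)}$ by concatenating, in the component order $V_1,\ldots,V_c$ returned by \textit{CO-Lex-BFS}, the straight enumeration of each piece $G^{(\ell)}[V_\omega]$ --- taken to be the chosen orientation of $\phi_\omega$ when $\alpha_\ell\le a'_{\min}(\omega)$ and the enumeration supplied by the recursive call otherwise. Since every connected component of $G^{(\ell)}$ lies inside some $V_\omega$, concatenating straight enumerations of the pieces in any fixed ordering of the components yields a straight enumeration of $G^{(\ell)}$. Compatibility with the returned $\Phi=(\Phi_1,\ldots,\Phi_c)$ and with $\psi$ then follows component-wise: inside each $V_\omega$ the returned block $\Phi_\omega$ refines $\textit{Refine}(\psi[V_\omega],\phi_\omega)$ (hence is compatible with $\phi_\omega$ and with $\psi[V_\omega]$) and, by the induction hypothesis, is compatible with the recursively produced enumerations; across components, compatibility is inherited from the ordering $V_1\le_\psi\cdots\le_\psi V_c$ guaranteed by Lemma~\ref{thm:connected components}\textit{(i)}.

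For part~\textit{(ii)} I would argue by contradiction: suppose straight enumerations of the level graphs of $A$ exist that are pairwise compatible and compatible with $\psi$. By Lemma~\ref{thm: easy common refinement and pi} there is then a linear order $\pi$ compatible with $\psi$ and with all of them, and by Theorem~\ref{thm: Robinsonian decomposition in uig}\textit{(ii)} this $\pi$ reorders $A$ as a Robinson matrix. I then trace the possible stopping points. If the algorithm stops inside \textit{CO-Lex-BFS}, Lemma~\ref{thm:connected components}\textit{(ii)} together with Lemma~\ref{thm:order connected components} rules out any straight enumeration of $G=G^{(1)}$ compatible with $\psi$, contradicting the existence of $\pi$. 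If \textit{Straight\_enumeration} stops, some $G[V_\omega]$ is not a unit interval graph, so $G^{(1)}$ has no straight enumeration at all (Theorem~\ref{thm:uig and straight enumeration}). If the algorithm stops at line~\ref{alg:Rob line}, then by Lemma~\ref{thm:compatible refine} neither $\phi_\omega$ nor $\overline{\phi}_\omega$ is compatible with $\psi[V_\omega]$; as these are the only straight enumerations of the connected graph $G[V_\omega]$, the restriction $\pi[V_\omega]$ could not be compatible with either, a contradiction.

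The hard part is the remaining case, where the stop occurs inside a recursive call $\textit{Robinson}(A'[V_\omega],\Phi_\omega^{(0)})$, because the algorithm has committed to one orientation (say $\phi_\omega$, the first whose refinement $\Phi_\omega^{(0)}$ with $\psi[V_\omega]$ is nonempty) whereas the global solution might use the reverse. Restricting $\pi$ to $V_\omega$ gives an order compatible with $\psi[V_\omega]$ and with straight enumerations of every $G^{(\ell)}[V_\omega]$, i.e.\ of every level graph of $A'[V_\omega]$; by the induction hypothesis applied to part~\textit{(ii)}, this contradicts the recursive stop provided I can arrange $\pi[V_\omega]$ to be compatible with $\Phi_\omega^{(0)}$. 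If $\pi[V_\omega]$ is already compatible with $\phi_\omega$ this is immediate (Lemma~\ref{thm: easy common refinement and pi}). Otherwise $\pi[V_\omega]$ is compatible with $\overline{\phi}_\omega$, and two observations close the gap: first, a short computation shows that $\psi[V_\omega]$ can be compatible with both $\phi_\omega$ and $\overline{\phi}_\omega$ only if $V_\omega$ lies entirely inside a single block of $\psi$; second, in any Robinson reordering the connected components of the support occupy intervals, so $V_\omega$ is an interval under $\pi$, and reversing $\pi$ on that interval yields another global solution $\pi'$ --- reversal preserves the Robinson property because all entries between $V_\omega$ and its complement vanish, and preserves compatibility with $\psi$ because $V_\omega$ sits in one $\psi$-block. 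Since $\pi'$ is again a global solution with $\pi'[V_\omega]=\overline{\pi[V_\omega]}$ now compatible with $\phi_\omega$, hence with $\Phi_\omega^{(0)}$, the induction hypothesis delivers the desired contradiction.
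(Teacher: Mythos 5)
Your proof is correct and follows the paper's overall strategy: induction on the number $L$ of distinct nonzero values, a case analysis on where the algorithm stops, and the same pivotal observation that if $\psi[V_\omega]$ is compatible with both $\phi_\omega$ and $\overline{\phi}_\omega$ then $V_\omega$ must sit inside a single block of $\psi$ (both your argument and the paper's should note the degenerate subcase $\phi_\omega=\overline{\phi}_\omega$, where one simply falls back to the first alternative). The one place where you genuinely diverge is the hard case of \textit{(ii)}, the stop inside a recursive call at line~\ref{alg:Rob line 2}. The paper stays entirely at the level of weak linear orders: it takes the hypothetical enumerations $\varphi^{(2)},\dots,\varphi^{(L)}$, restricts them to $V_\omega$, and reverses those restrictions, which immediately yields enumerations of the level graphs of $A'[V_\omega]$ compatible with $\Phi_\omega=\phi_\omega$ and contradicts the induction hypothesis. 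You instead pass to a global Robinson ordering $\pi$ via Lemma~\ref{thm: easy common refinement and pi} and Theorem~\ref{thm: Robinsonian decomposition in uig}, and reverse $\pi$ on the interval occupied by $V_\omega$; this is equally valid but costs you two auxiliary facts the paper never needs, namely that a connected component of the support is an interval in any Robinson ordering and that reversing the ordering on such an interval preserves both the Robinson property and compatibility with $\psi$ (the latter precisely because $V_\omega$ lies in one $\psi$-block). Your sketches of both facts are sound, so the argument closes, but the paper's direct reversal of the restricted enumerations is the leaner route; what your detour buys is a more concrete picture of \emph{why} the orientation commitment is harmless, phrased in terms of actual Robinson orderings rather than abstract compatibility of weak orders.
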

\begin{proof}
The proof is by induction on the number $L$ of distinct nonzero entries of the matrix $A$. 
We first consider the base case $L=1$, i.e., when $A$ is (up to scaling) 0/1 valued. We first show \textit{(i)} and assume that the algorithm terminates successfully and returns $\Phi$. Then $G$ is the support of $A$,  \textit{CO-Lex-BFS}$(G,\psi)$ orders the components of $G$ as $V_1\le_\psi \ldots \le_\psi V_c$, and 
$\Phi=(\Phi_1,\ldots, \Phi_c)$ where each $\Phi_\omega=\Phi[V_\omega]$ is build as the common refinement of~$\psi[V_\omega]$ and a straight enumeration of~$G[V_\omega]$ (either $\phi_{\omega}$ or $\overline{\phi}_{\omega}$). Hence $G$ has a straight enumeration $\phi$ and the returned~$\Phi$ is compatible with $\phi$ and~$\psi$. 

We now show \textit{(ii)} and assume that Algorithm \ref{alg:Robinsonian recognition} stops. If it stops when applying \textit{CO-Lex-BFS}$(G,\psi)$, then no order of the components of $G$ exists that is compatible with $\psi$ and thus no straight enumeration of $G$ exists that is compatible with $\psi$ 
(Lemma~\ref{thm:order connected components}). 
If the algorithm stops when applying \textit{Straight\_enumeration} to $G[V_{\omega}]$ then no straight enumeration of $G[V_{\omega}]$ exists. Else, if the algorithm stops at line \ref{alg:Rob line} in Algorithm~\ref{alg:Robinsonian recognition}, then $\psi[V_{\omega}]$ is not compatible with neither $\phi_{\omega}$ nor $\overline{\phi}_{\omega}$.
Because  $G[V_{\omega}]$ is connected, $\phi_{\omega}$ and $\overline{\phi}_{\omega}$ are its unique straight enumerations (see Theorem~\ref{thm:uig and straight enumeration}) and therefore no straight enumeration of $G[V_{\omega}]$ is compatible with $\psi[V_{\omega}]$.
In both cases, no straight enumeration of $G$ exists that is compatible with~$\psi$.

\smallskip
We now assume that Theorem \ref{thm:correctness} holds for any matrix whose entries take  at most $L-1$ distinct nonzero values.
We show that the theorem holds when considering  $A$ whose nonzero entries take $L$ distinct values. We follow the same lines as the above proof for the case $L=1$, except that we use recursion for some components.
First, assume that the algorithm terminates and returns $\Phi$. Then, $\Phi=(\tilde \Phi_1,\ldots, \tilde \Phi_c)$ after ordering the components compatibly with $\psi$ as 
$V_1\le_\psi \ldots \le_\psi V_c$, constructing the common refinement $\Phi_{\omega}$ of $\psi[V_{\omega}]$ and a straight enumeration (say) $\phi_{\omega}$ of $G[V_{\omega}]$, and having  
$\tilde \Phi_\omega= \textit{Robinson}(A'[V_{\omega}], \Phi_\omega)$, where $A'[V_\omega]$ is obtained from $A[V_\omega]$ by setting to 0 its entries with smallest nonzero value. 
By the induction assumption, $\tilde \Phi_\omega$ is compatible  with straight enumerations of the level graphs of the matrix $A'[V_\omega]$ and with $\Phi_\omega$. 
As $\tilde \Phi_\omega$ is compatible with~$\Phi_\omega$, which refines both $\psi[V_{\omega}]$ and~$\phi_{\omega}$, it follows that 
$\tilde \Phi_\omega$ is compatible with~$\psi[V_\omega]$ and~$\phi_\omega$.
Therefore, $\tilde \Phi_\omega$ is compatible with straight enumerations of all the level graphs of~$A[V_\omega]$ and thus $\Phi=(\tilde \Phi_1,\ldots, \tilde \Phi_c)$ is compatible with~$\psi$ and all level graphs of~$A$, as desired. 

Assume now that the algorithm stops. 
If the algorithm stops at \textit{CO-Lex-BFS}$(G,\psi)$, then no linear order of the connected components of $G$ exists that is compatible with $\psi$ and then no straight enumeration of $G$ exists that is compatible with $\psi$ (Lemma~\ref{thm:order connected components}), giving the desired conclusion.
If the algorithm stops at line ~\ref{alg:Rob line}, then a connected component $V_{\omega}$ is found for which $\psi[V_{\omega}]$ is not compatible with any straight enumeration of $G[V_{\omega}]$, giving again the desired conclusion.

Assume now that the algorithm stops at line~\ref{alg:Rob line 2}, i.e., there is a component~$V_{\omega}$ for which the algorithm terminates when applying $\textit{Robinson}(A'[V_{\omega}], \Phi_\omega)$.
Then, by the induction assumption, we know that:
\begin{equation}\label{eq:induction correctness} \tag{*}
\begin{array}{l}
\text{no straight enumerations of the level graphs of } A'[V_{\omega}] \text{ exist  }\\
\text{that are pariwise compatible and compatible with  } \Phi_{\omega},
\end{array}
\end{equation}

\noindent
where $\Phi_{\omega}$ is the common refinement of~$\psi[V_{\omega}]$ and a straight enumeration (say)~$\phi_{\omega}$ of $G[V_{\omega}]$.
Assume, for the sake of contradiction, that there exist straight enumerations~$\varphi^{(1)},\dots,\varphi^{(L)}$ of the level graphs~$G^{(1)}=G,\dots,G^{(L)}$ of $A$, that are pairwise compatible and compatible with $\psi$.
In particular, $\varphi^{(1)}[V_{\omega}]$ is a straight enumeration of $G[V_{\omega}]$ compatible with $\psi[V_{\omega}]$.
If $\varphi^{(1)}[V_{\omega}] = \phi_{\omega}$, then the restrictions $\varphi^{(\ell)}$ ($\ell \geq 2$) yield straight enumerations of the level graphs of $A'[V_{\omega}]$ that are pairwise compatible and compatible with $\phi_{\omega}$ and $\psi[V_{\omega}]$, and thus with their refinement 
$\Phi_{\omega} = \psi[V_{\omega}] \wedge \phi_{\omega}$, contradicting~(\ref{eq:induction correctness}).
Hence, $\varphi^{(1)}[V_{\omega}] = \overline{\phi}_{\omega}$, so that $\psi[V_{\omega}]$ is compatible with both $\phi_{\omega}$ and its reversal $\overline{\phi}_{\omega}$.
This implies that $\psi[V_{\omega}] = (V_{\omega})$.
But then the reversals $\overline{\varphi}^{(2)}[V_{\omega}], \dots, \overline{\varphi}^{(L)}[V_{\omega}]$ provide straight enumerations of the level graphs of $A'[V_{\omega}]$ that are pairwise  compatible and compatible with $\overline{\varphi}^{(1)}[V_{\omega}] = \phi_{\omega} = \Phi_{\omega}$.
This contradicts again~(\ref{eq:induction correctness}) and concludes the proof.
\end{proof}

\subsection{Complexity analysis }\label{seccomplexity}
We now study the complexity of our main  algorithm. 
First we discuss the complexity of the two  subroutines \textit{CO-Lex-BFS} and \textit{Refine} in Algorithms 1 and 2 and then we  derive the complexity of the final Algorithm 4.
In the rest of the section, we let $m$ denote the number of nonzero (upper diagonal) entries of $A$, so that $m$ is the number of edges of the support graph $G=G^{(1)}$ and $m=|E_1| \ge |E_2|\ge \ldots \ge |E_L|$ for the level graphs of $A$.
We  assume that $A$ is a nonnegative symmetric matrix, which is  given as an adjacency list of an undirected weighted graph, where each vertex $x \in V$ is linked to the list of vertex/weight pairs corresponding to the neighbors $y$ of $x$ in $G$ with nonzero entry~$A_{xy}$.

A simple but important observation that we will repeatedly use is that, given a weak linear order $\psi$ of $V$, we can assume the vertices $V$ to be ordered according to a linear order~$\tau$ of $V$ compatible with~$\psi$. 
Then, the blocks of $\psi$ are intervals of the order $\tau$ and thus one can check whether a given set $C\subseteq V$ is contained in a block $B$ of $\psi$ in $O(|C|)$ operations (simply by comparing each element of $C$ to the end points of the interval $B$). 
Furthermore, the size of any block of $\psi$ is simply given by the difference between its extremities (plus one).

\begin{lemma}\label{thm:CO-Lex-BFS_complexity}
Algorithm \ref{alg:Lex-BFSweak} runs in $O(|V|+|E|)$ time.
\end{lemma}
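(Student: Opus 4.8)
The plan is to establish the linear-time bound by showing that Algorithm~\ref{alg:Lex-BFSweak} is a modified Lex-BFS in which every operation can be charged either to a vertex or to an edge of $G$, together with the observation (from the discussion preceding the lemma) that all the block-related bookkeeping can be done in time proportional to the sizes of the sets involved once the vertices are stored according to a linear order $\tau$ compatible with $\psi$. First I would recall that a standard Lex-BFS runs in $O(|V|+|E|)$ time using partition refinement: the labels are never manipulated explicitly as strings, but rather the slice $S$ at line~\ref{alg:sliceS} is maintained implicitly through an ordered partition of the unvisited vertices, and appending $i-1$ to $label(w)$ for each unvisited neighbor $w$ of the just-visited vertex $p$ is implemented by splitting the classes of this partition. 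The total work for all these splits is $O(\sum_{p\in V}\deg_G(p))=O(|E|)$, since each edge $\{p,w\}$ triggers exactly one refinement operation when $p$ is visited.

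Next I would account for the extra work specific to our variant, namely the per-iteration maintenance of the current component $V_\omega$ together with its first and last meeting blocks $B_\omega^{\min}$ and $B_\omega^{\max}$, and the compatibility checks performed when a component is closed. Adding the visited vertex $p$ to $V_\omega$ and updating $B_\omega^{\min}$, $B_\omega^{\max}$ costs $O(1)$ per vertex, using the convention that each block of $\psi$ is an interval of $\tau$ and is identified by its two endpoints; hence over the whole run this contributes $O(|V|)$. The tie-break at line~10, which among the vertices of the slice $S$ picks one coming first in $\psi$, is compatible with maintaining $S$ as an interval of $\tau$, so it also costs $O(1)$ amortized per vertex. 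The delicate point is the stopping test that checks, each time a component closes, whether there is a block $B$ of $\psi$ with $B\not\subseteq V_\omega$ and $B_\omega^{\min}<_\psi B<_\psi B_\omega^{\max}$.

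The main obstacle I expect is precisely to argue that this containment test does not cost more than $O(|V_\omega|)$ per component, so that it telescopes to $O(|V|)$ overall. The key idea is that one does not scan all blocks of $\psi$ strictly between $B_\omega^{\min}$ and $B_\omega^{\max}$; instead, using that the blocks are intervals of $\tau$ and that $V_\omega$ is known, one computes the total number of vertices lying in blocks between $B_\omega^{\min}$ and $B_\omega^{\max}$ (which is just the difference of the relevant endpoints in $\tau$) and compares it against $|V_\omega|$ together with a check that $V_\omega$ occupies exactly the vertices of those blocks. Equivalently, every inner block is contained in $V_\omega$ if and only if the vertices of $V_\omega$ form a contiguous interval of $\tau$ spanning from the start of $B_\omega^{\min}$ to the end of $B_\omega^{\max}$; this can be verified in $O(|V_\omega|)$ time by inspecting only the vertices of $V_\omega$ and the two boundary blocks. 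Similarly, the swap at line~\ref{alg:line Q node} rewrites $\sigma$ only on $V_\omega\cup V_{\omega-1}$, so its cost is $O(|V_\omega|+|V_{\omega-1}|)$, and the comparison $B_\omega^{\min}<_\psi B_{\omega-1}^{\max}$ is an $O(1)$ endpoint comparison.

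Finally I would conclude by summing the contributions: the underlying Lex-BFS costs $O(|V|+|E|)$; the component maintenance, slice tie-breaking, closure tests, and swaps each cost $O(|V_\omega|)$ or $O(|V_\omega|+|V_{\omega-1}|)$ per closed component, and since the components partition $V$ these sum to $O(|V|)$. Hence the total running time is $O(|V|+|E|)$, as claimed.
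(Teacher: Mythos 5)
Your overall strategy is the same as the paper's: the underlying Lex-BFS is the standard linear-time partition-refinement implementation, and all the component bookkeeping is charged at $O(|V_\omega|)$ per component using the fact that the blocks of $\psi$ are intervals of a compatible linear order $\tau$. The one step that is not right as written is your ``equivalently'' reformulation of the inner-block test: it is \emph{not} true that all inner blocks are contained in $V_\omega$ if and only if $V_\omega$ is a contiguous interval of $\tau$ spanning from the start of $B_\omega^{\min}$ to the end of $B_\omega^{\max}$. The component $V_\omega$ only has to \emph{meet} $B_\omega^{\min}$ and $B_\omega^{\max}$; it may omit some of their vertices, in which case the test should still succeed even though $V_\omega$ is neither contiguous in $\tau$ nor spanning those two blocks. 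For the same reason, comparing the number of vertices lying in the blocks between $B_\omega^{\min}$ and $B_\omega^{\max}$ against $|V_\omega|$ compares the wrong quantities. The correct $O(|V_\omega|)$ implementation---the one the paper uses---is to let $B_\omega$ be the union of the strictly inner blocks (an interval of $\tau$ whose size is read off from its endpoints), check that $V_\omega\setminus(B_\omega^{\min}\cup B_\omega^{\max})\subseteq B_\omega$, and then check that these two sets have the same cardinality; both steps inspect only the elements of $V_\omega$. With that correction the rest of your accounting is sound and the $O(|V|+|E|)$ bound follows exactly as in the paper.
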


\begin{proof}
It is well known that Lex-BFS can be implemented in linear time $O(|V|+|E|)$. In our implementation of Algorithm \ref{alg:Lex-BFSweak} we will follow  the linear time implementation of Corneil \cite{Corneil04}, which uses the data structure based on the paradigm of ``partitioning" presented in \cite{Habib00}.
Recall that the blocks of $\psi$ are intervals in $\tau$, which is a linear order compatible with~$\psi$.
In order to carry out the other operations about the components of $G$ we maintain a doubly linked list, where each node of the list represents a connected component $V_{\omega}$ of $G$ and it has a pointer to the connected component $V_{\omega-1}$ ordered immediately before $V_\omega$ and to the connected component $V_{\omega+1}$ ordered immediately after~$V_\omega$.
Then, swapping two connected components can be done simply by swapping the left and right pointers of the corresponding connected components in the doubly linked list.
Furthermore, each node in this list contains the set of vertices in $V_{\omega}$, the first block $B_{\omega}^{\min}$ and the last block $B_{\omega}^{\max}$ in $\psi$ meeting $V_{\omega}$.  These two blocks $B^{\min}_\omega$ and $B^{\max}_\omega$ can be found in time $O(|V_\omega|)$ as follows. First one finds the smallest element $v_{\min}$ (resp. the  largest  element $v_{\max}$) of $V_\omega$ in the order $\tau$, which can be done in $O(|V_\omega|)$.  Then, $B^{\min}_\omega$ is the block of $\psi$ containing $v_{\min}$, which can be found in $O(|V_\omega|)$. Analogously for $B^{\max}_\omega$, which is the block of $\psi$ containing $v_{\max}$.  Checking whether $V_\omega$ is contained in the block $B^{\min}_{\omega-1}$ can be done in $O(|V_\omega|)$ (since $B^{\min}_{\omega-1}$ is an interval). In order to check whether all the inner blocks between $B^{\min}_\omega$ and $B^{\max}_\omega$ are contained in $V_\omega$ we proceed as follows. Let $B_\omega$ be the union of these inner blocks, which is an interval of $\tau$. First we compute the sets $V_\omega\cap B^{\min}_\omega$ and $V_\omega\cap B^{\max}_\omega$, which can be done in $O(|V_\omega|)$. Then we need to check whether $B_\omega\subseteq V_\omega$ or, equivalently, whether 
the two sets $V_\omega \setminus (B^{\min}_\omega\cup B^{\max}_\omega)$ and  $B_\omega$ are equal. For this we check first whether 
$V_\omega \setminus (B^{\min}_\omega\cup B^{\max}_\omega)$ 
is contained in $B_\omega$ (in time $O(|V_\omega|)$) and then whether these two sets have the same cardinality, which can be done in $O(|V_\omega|)$. 
Hence, the complexity of this task is $O(\sum_\omega|V_\omega|)=O(|V|)$. 
Therefore we can conclude that the overall complexity of Algorithm \ref{alg:Lex-BFSweak} is $O(|V|+|E|)$.
\end{proof}

\begin{lemma}\label{thm:refine_complexity}
Algorithm \ref{alg:refine} runs in $O(|V|)$ time.
\end{lemma}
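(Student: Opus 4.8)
The plan is to bound the running time of \textit{Refine}$(\psi,\phi)$ by carefully accounting for the cost at each level of the recursion and observing that the recursion depth equals the number $q$ of blocks of $\phi$, while the work done across all levels telescopes to $O(|V|)$. First I would recall the key data-structure convention established just before Lemma~\ref{thm:CO-Lex-BFS_complexity}: the vertices of $V$ are stored in the order $\tau$ compatible with $\psi$, so that every block of $\psi$ is an interval of $\tau$, the size of a block is the difference of its endpoints, and testing whether a set $C$ lies inside a single block $B$ costs $O(|C|)$. This is exactly the representation that makes each individual step of Algorithm~\ref{alg:refine} cheap.

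Next I would analyze the cost of a single (non-recursive) invocation on input $(\psi,\phi)$ with $\phi=(C_1,\dots,C_q)$. The three operations performed before the recursive call are: (a) finding $B^{\max}$, the last block of $\psi$ meeting $C_1$, which amounts to scanning the elements of $C_1$ and locating the $\psi$-block of the $\tau$-largest one, costing $O(|C_1|)$; (b) the compatibility test of line~\ref{alg:Refine line 1}, i.e. checking whether every block $B<_\psi B^{\max}$ is contained in $C_1$ --- since all such blocks together with $C_1$ form initial intervals of $\tau$, this reduces to comparing cardinalities (the total size of the blocks strictly before $B^{\max}$ plus $|C_1\cap B^{\max}|$ against $|C_1|$, using that block sizes are available as endpoint differences), again $O(|C_1|)$; and (c) forming $\psi[C_1]$ and the restrictions $\psi[W],\phi[W]$ with $W=V\setminus C_1$, which costs $O(|C_1|)$ given the interval representation. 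Thus the local work at this level is $O(|C_1|)$.

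The recursion then calls \textit{Refine}$(\psi[W],\phi[W])$, where $\phi[W]=(C_2,\dots,C_q)$ has one fewer block and $|W|=|V|-|C_1|$. The same analysis applies at every level: the level whose first block is $C_j$ performs $O(|C_j|)$ local work. Since the blocks $C_1,\dots,C_q$ partition $V$, summing over all levels gives total running time
\begin{equation*}
\sum_{j=1}^{q} O(|C_j|) = O\Bigl(\sum_{j=1}^{q}|C_j|\Bigr) = O(|V|).
\end{equation*}
Finally, the closing test after the recursion returns (whether $\Phi$ is a weak linear order of all of $V$, lines \ref{alg:Refine line 2}) is just a check that the concatenated output covers $V$, which can be folded into the $O(|C_j|)$ bookkeeping at each level. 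Assembling these pieces yields the claimed $O(|V|)$ bound.

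The step I expect to require the most care is justifying that the compatibility/containment test (b) really runs in $O(|C_1|)$ rather than $O(|V|)$: a naive reading of ``there exists a block $B<_\psi B^{\max}$ with $B\not\subseteq C_1$'' suggests scanning all blocks of $\psi$ before $B^{\max}$. The point to make explicit is that one never enumerates those blocks individually; instead, one exploits that the blocks before $B^{\max}$ together with $C_1$ occupy an initial segment of $\tau$, so the test collapses to an $O(|C_1|)$ cardinality comparison. Getting this interval-arithmetic accounting right --- and confirming that the restriction operations at each level touch only $C_1$ and not the untouched tail $W$ --- is the crux of keeping the per-level cost proportional to $|C_1|$ and hence the total cost linear in $|V|$.
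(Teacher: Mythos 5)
Your proposal is correct and follows essentially the same route as the paper's proof: both charge $O(|C_1|)$ work to the head block (locating $B^{\max}$ via the $\tau$-largest element of $C_1$ and reducing the containment test to an interval-arithmetic cardinality comparison) and then recurse on $W=V\setminus C_1$, the only cosmetic difference being that you sum the per-level costs over the blocks of $\phi$ while the paper packages the same accounting as an induction on $q$. Your variant of the containment test (comparing $|B|+|C_1\cap B^{\max}|$ with $|C_1|$, using that $C_1\setminus B^{\max}\subseteq B$ holds automatically since $B^{\max}$ is the last block meeting $C_1$) is a valid, slightly streamlined version of the paper's two-step check.
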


\begin{proof}
We show the lemma using induction on the number $q$ of blocks of $\phi$. 
Recall that the blocks of $\psi$ are intervals in~$\tau$, which is a linear order compatible with $\psi$. 
If $q=1$ the result is clear since the algorithm returns $\Phi=\phi$ without any work.
Assume $q\ge 2$. The first task is to compute the last block $B^{\max}$ of $\psi$ meeting $C_1$. For this, as in the proof of the previous lemma, one finds the largest element $v_{\max}$ of $C_1$ in the order $\tau$ and one returns the block of $\psi$ containing $v_{\max}$, which can be done in $O(|C_1|)$.
Then let $B$ be the union of the blocks preceding $B^{\max}$. In order to check whether $B\subseteq C_1$ or, equivalently, whether $C_1\setminus B^{\max}= B$, we proceed as in the previous lemma: we first check whether $C_1\setminus B^{\max}\subseteq  B$ and then whether $|C_1\setminus B^{\max}|= |B|$, which can be done in $O(|C_1|)$.
Hence, the running time is $O(|C_1|)$ for this task which, together with 
the running time $O(|V\setminus C_1|)$ for the recursive application of \textit{Refine} to the restrictions of $\psi$ and $\phi$ to
the set $V\setminus C_1$, gives an overall running time $O(|V|)$.
\end{proof}

We can now complete the complexity analysis of our algorithm.

\begin{theorem}\label{thm:Robinsonian_recognition_complexity}
Let $A$ be a nonnegative $n\times n$ symmetric matrix given as a weighted adjacency list and let $m$ be the number of (upper diagonal) nonzero entries of $A$. 
Algorithm~\ref{alg:Robinsonian_main} recognizes whether $A$ is a Robinsonian matrix in time $O(d(n+m))$, where $d$ is the depth of the recursion tree created by Algorithm \ref{alg:Robinsonian_main}. Moreover, $d\le L$, where $L$ is the number of distinct nonzero entries of $A$.
\end{theorem}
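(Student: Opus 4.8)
The plan is to bound the work performed at each level of the recursion tree by $O(n+m)$, using the per-subroutine bounds of Lemmas \ref{thm:CO-Lex-BFS_complexity} and \ref{thm:refine_complexity}, and then to multiply by the number of levels $d$. Organizing the accounting by depth rather than by individual node is the crucial idea, since it lets the disjointness of vertex sets across a level carry the argument.

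First I would establish that a single (non-recursive) invocation of \textit{Robinson}$(A'',\psi'')$ on a principal submatrix with vertex set $U$ and support graph $H=G[U]$ runs in time $O(|U|+|E(H)|)$. This amounts to summing the costs of its constituent steps. The call to \textit{CO-Lex-BFS} is $O(|U|+|E(H)|)$ by Lemma \ref{thm:CO-Lex-BFS_complexity}. For each component $V_\omega$, the routine \textit{Straight\_enumeration} performs two Lex-BFS+ sweeps and one linear scan, costing $O(|V_\omega|+|E(H[V_\omega])|)$; the at most two calls to \textit{Refine} cost $O(|V_\omega|)$ by Lemma \ref{thm:refine_complexity}; and forming $A'[V_\omega]$ (scanning the adjacency list to find the smallest nonzero value and deleting the corresponding edges) is again $O(|V_\omega|+|E(H[V_\omega])|)$. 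Since the components $V_\omega$ partition $U$ and the edge sets $E(H[V_\omega])$ partition $E(H)$ (there are no edges between components), summing over $\omega$ yields the claimed $O(|U|+|E(H)|)$ bound, and splitting the adjacency list of $H$ into the per-component lists fits the same budget. Throughout, I keep the vertices ordered along a linear order $\tau$ compatible with the current weak order, so that, as explained just before Lemma \ref{thm:CO-Lex-BFS_complexity}, blocks are intervals and all bookkeeping stays within budget.

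The heart of the argument is the depth-based accounting, and this is the step I expect to require the most care. The key structural fact is that the vertex sets of the nodes appearing at any fixed depth of the recursion tree are pairwise disjoint. I would prove this by induction on the depth: the children of a node with vertex set $U$ are obtained by restricting to the connected components of the support of $A'[U]$, which partition $U$; since distinct nodes at a given depth already have disjoint vertex sets, and each child is contained in its parent's set, the children are disjoint as well. Consequently, at any fixed depth $\sum_{\text{nodes}}|U|\le n$; moreover every edge of a node's support graph is an edge of the original support $G$ with both endpoints in that node's vertex set, so by disjointness each edge of $G$ is charged to at most one node per depth, giving $\sum_{\text{nodes}}|E(H)|\le m$. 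Combining with the per-node bound, the total non-recursive work performed at each depth is $O(n+m)$, and summing over the $d$ depths yields the overall bound $O(d(n+m))$. (The final step of Algorithm \ref{alg:Robinsonian_main}, extracting a linear order $\pi$ compatible with $\Phi$, is $O(n)$ and absorbed.)

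Finally, for the bound $d\le L$, I would track the number of distinct nonzero values along any root-to-leaf path. Each recursive call replaces the current matrix by $A'[V_\omega]$, obtained from a principal submatrix by zeroing out its smallest nonzero value; hence the number of distinct nonzero values strictly decreases by at least one at every step. Since this count equals $L$ at the root, and a node spawns a recursive call only when its thresholded submatrix is still non-diagonal (i.e.\ has at least one nonzero value), no root-to-leaf path can exceed $L$ levels, so $d\le L$.
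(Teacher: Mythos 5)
Your proposal is correct and follows essentially the same route as the paper: the paper bounds the non-recursive work of one call by $O(|U|+|E(H)|)$ using Lemmas \ref{thm:CO-Lex-BFS_complexity} and \ref{thm:refine_complexity} and then sums over the recursion tree by induction on its depth, exploiting the disjointness of the components exactly as your level-by-level accounting does, and obtains $d\le L$ from the same strict decrease in the number of distinct nonzero values at each recursive call. Your per-depth summation and the paper's induction on $d$ are two standard phrasings of the same recursion-tree argument, so there is nothing substantive to distinguish them.
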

\begin{proof}
We show the result using induction on the depth $d$ of the recursion tree.
In Algorithm \ref{alg:Robinsonian_main} we are given a matrix $A$ and its support graph $G$, and we set $\psi=(V=[n])$.
First we run the routine \textit{CO-Lex-BFS}$(G,\psi)$ in $O(n+m)$ time, in order  to find and order the components of $G$.
For each component $V_\omega$, the following tasks are performed. 
We compute a straight enumeration $\phi_\omega$ of $G[V_\omega]$, in time $O(n_\omega+m_\omega)$ where $n_\omega= |V_\omega|$ and $m_\omega$ is the number of edges of $G[V_\omega]$.
The reversal $\overline{\phi}_{\omega}$ can be computed in $O(|V_{\omega}|)$ by simply reversing the ordered partition $\phi_{\omega}$, which is stored in a double linked list.
Hence, we apply the routine \textit{Refine} to~$\psi[V_\omega]$ and~$\phi_\omega$ (or $\overline{\phi}_\omega$), which can be done in $O(n_\omega)$ time.
Then we build the new matrix $A'[V_\omega]$ and checks whether it is diagonal, in time $O(m_\omega)$. 
Finally, by the induction assumption, the recursion step \textit{Robinson}$(A'[V_\omega],\Phi_\omega)$ is carried out in time 
$O(d_\omega(n_\omega+m_\omega))$, where $d_\omega $ denotes the depth of the corresponding recursion tree.
As $d_\omega\le d-1$ for each $\omega$, after summing up, we find that  the overall complexity is $O(d(n+m))$.
The last claim: $d\le L$ is clear since the number of distinct nonzero entries of the current matrix decreases by at least 1 at each recursion node. 
\end{proof}

\subsection{Finding all Robinsonian orderings}\label{secpermutations}

In general, there might exist several permutations reordering a given matrix $A$ as a Robinson matrix.
We show here how to return all Robinson orderings of a given matrix $A$, using the PQ-tree data structure of \cite{Booth76}.

A PQ-tree $\mathcal{T}$ is a special rooted ordered tree. The leaves are in  one-to-one correspondence with the elements of the groundset $V$ and their order gives a linear order of $V$.
The nodes of $\mathcal T$ can be of two types, depending on how their children can be ordered. Namely, 
for a \textit{P-node} (represented by a circle), its children may be arbitrary reordered; for a \textit{Q-node} (represented by a rectangle), only the order of its  children may be  reversed.
Moreover, every node has at least two children.
Given a  node $\alpha$ of $\mathcal T$,  $\mathcal{T}_{\alpha}$ denotes the subtree of $\mathcal T$ with root $\alpha$.

A straight enumeration $\psi=(B_1,\dots,B_p)$ of a graph $G=(V,E)$ corresponds in a unique way to a PQ-tree $\mathcal{T}$ as follows. 
If $G$ is connected, then the {root}  of $\mathcal T$ is a Q-node, denoted $\gamma$, and it has children $\beta_1,\ldots,\beta_p$ (in that order). For $i\in [p]$, the node $\beta_i$ is a P-node corresponding to the block $B_i$  and its children are the elements of the set $B_i$, which are the leaves of the subtree $\mathcal T_{\beta_j}$.
If a block $B_i$ is a singleton then no node $\beta_i$ appears and the element of $B_i$ is directly a child of the root $\gamma$ (see the example in Figure \ref{fig:example PQ-Tree contig}).

\begin{figure}[!h]
\centering
\begin{minipage}{.4\textwidth}
\centering
\begin{equation*}
A_G =
\bordermatrix{
~ & \textbf{1} & \textbf{2} & \textbf{3} & \textbf{4} & \textbf{5} & \textbf{6}\cr
\textbf{1} & 1 & 1 & 1 & 1 & 0 & 0 \cr
\textbf{2} & 1 & 1 & 1 & 1 & 1 & 1 \cr
\textbf{3} & 1 & 1 & 1 & 1 & 1 & 1 \cr
\textbf{4} & 1 & 1 & 1 & 1 & 1 & 1 \cr
\textbf{5} & 0 & 1 & 1 & 1 & 1 & 1 \cr
\textbf{6} & 0 & 1 & 1 & 1 & 1 & 1 \cr
}
\end{equation*} 
\end{minipage} 
\begin{minipage}{.55\textwidth}
\centering
\includegraphics{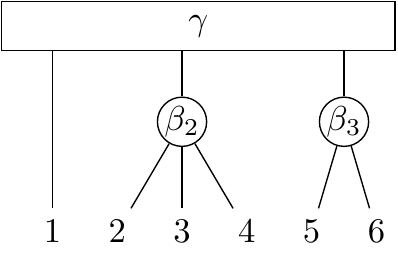}
\end{minipage}
\caption{A connected graph $G$ and the PQ-tree corresponding to its straight enumeration}
\label{fig:example PQ-Tree contig}
\end{figure}

If $G$ is not connected, let $V_1,\dots,V_c$ be its connected components. 
For each connected component $G[V_{\omega}]$, $\mathcal T_\omega$ is its PQ-tree (with root $\gamma_\omega$) as indicated above. 
Then, the full PQ-tree  $\mathcal{T}$ is obtained by inserting a P-node $\alpha$ as ancestor, whose children are the subtrees $\mathcal{T}_1,\ldots,\mathcal{T}_c$ (see Figure \ref{fig:example PQ-tree disconnected}).

\begin{figure}[ht!]
\centering
  	\includegraphics[width=.55\linewidth]{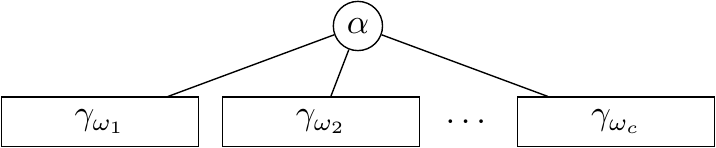}
\caption{The PQ-tree corresponding to the straight enumeration of a disconnected graph}
\label{fig:example PQ-tree disconnected}
\end{figure}

\medskip
We now indicate how to modify Algorithms \ref{alg:Robinsonian recognition} and \ref{alg:Robinsonian_main} in order to return a PQ-tree $\mathcal T$ encoding all the permutations ordering $A$ as a Robinson matrix.

We modify Algorithm \ref{alg:Robinsonian recognition} by taking as input, beside the matrix $A$ and the weak linear order $\psi$, also a node $\alpha$.
Then, the output is a PQ-tree $\mathcal{T}_{\alpha}$ rooted in $\alpha$, representing all the possible weak linear orders $\Phi$ compatible with $\psi$ and with straight enumerations of all the level graphs of $A$. It works as follows.

Let $G$ be the support of $A$. 
The idea is to recursively build a tree $\mathcal T_\omega$  for each  connected component $V_\omega$ of $G$ and then to merge these trees  according to the order of the components found by the routine  \textit{CO-Lex-BFS}($G,\psi$). 
To carry out this merging step we classify the components into the following three groups:
\begin{enumerate}
\item $\Theta$, which consists of all $\omega\in [c]$ for which the connected component $V_\omega$ meets at least two blocks of $\psi$.
\item $\Lambda$, which  consists of all $\omega\in [c]$ for which the component $V_\omega$ is contained in some block $B_i$, which 
 contains no other component. 
\item $\Omega=\cup_{i=1}^p\Omega_i$, where $\Omega_i$ consists of all $\omega\in [c]$ for which the component $V_\omega$ is contained in the block $B_i$,
which contains at least two components.
\end{enumerate}
Every time we  analyze a new connected component $\omega \in [c]$ in Algorithm \ref{alg:Robinsonian recognition}, we create a Q-node $\gamma_{\omega}$. 
After the common refinement $\Phi_{\omega}$  (of $\psi[V_\omega]$ and the straight enumeration $\phi_\omega$ of $G[V_\omega]$ or its reversal) has been computed, we have two possibilities.
If $A'[V_\omega]$ is diagonal, then we build the tree $\mathcal{T}_{\omega}$ rooted in $\gamma_{\omega}$ and whose children are P-nodes corresponding to the blocks of $\Phi_{\omega}$ (and prune the recursion tree  at this node).
Otherwise, we build the tree $\mathcal{T}_{\omega}$ recursively as output of \textit{Robinson}$(A'[V_\omega],\Phi_\omega,\gamma_\omega)$.\\ 
After all the connected components have been analyzed, we  insert the trees $\mathcal T_\omega$ in the final tree $\mathcal T_{\alpha}$ in the order they appear according to the routine \textit{CO-Lex-BFS}($G,\psi$). 
The root node is $\alpha$ and is given as input.
For each component $V_\omega$, we do the following operation to insert $\mathcal T_\omega$ in $\mathcal{T}_{\alpha}$, depending on the type of the component $V_\omega$:
\begin{enumerate}
\item If $\omega\in \Theta$, then $\phi_\omega$ (or $\overline{\phi}_{\omega}$) is the only straight enumeration compatible with $\psi[V_\omega]$.
Then we delete the node $\gamma_\omega$ and the children of $\gamma_\omega$ become children of $\alpha$ (in the same order). 
\item If $\omega\in \Lambda$, then both $\phi_\omega$ and its reversal $\overline{\phi}_\omega$ are compatible with $\psi[V_{\omega}]$. Then $\gamma_\omega$ becomes a child of $\alpha$.
\item If $\omega\in \Omega_i$ for some $i\in [p]$, then both $\phi_\omega$ and $\overline{\phi}_\omega$ are compatible with $\psi[V_\omega]$ and the same holds for any $\omega'\in \Omega_i$. Moreover, arbitrary permuting any two connected components $V_{\omega},V_{\omega'}$ with $\omega,\omega' \in \Omega_i$ will lead to a compatible straight enumeration. 
Then we insert a new node $\beta_i$ which is a P-node and  becomes a child of $\alpha$ and, for each $\omega'\in \Omega_i$,  $\gamma_{\omega'}$ becomes a child of $\beta_i$.
\end{enumerate}

\begin{center}
\resizebox{0.99\textwidth}{!}{
\begin{algorithm}[H] 
\SetKwInput {KwIn}{input}
\SetKwInput {KwOut}{output}
\KwIn{a nonnegative matrix $A \in \mathcal{S}^{n}$, a weak linear order $\psi$ of $V=[n]$ and a node $\alpha$}
\KwOut{A PQ-tree $\mathcal{T}_{\alpha}$ representing all the possible weak linear orders $\Phi$ compatible with $\psi$ and with straight enumerations of all the level graphs of $A$, or STOP (such a tree does not exist)}
\vspace{2ex}
$G$ is the support of A\\
\textit{CO-Lex-BFS}$(G,\psi)$ returns a linear order $(V_1,\dots,V_c)$ of the connected components of $G$ compatible with $\psi$ (if it exists) and a vertex order $\sigma$\\
group the connected components (c.c.) $V_{\omega}$ ($ \omega \in [c]$) of $G$ as follows:\\
$\Theta$ : all $\omega$ for which $V_\omega$ meets at least two blocks of $\psi$\\
$\Lambda$ : all $\omega$ for which  $V_\omega$ is contained in a block $B_i$ containing no other connected component\\
for $i\in [p]$,  $\Omega_i$: all $\omega$ for which  $V_\omega\subseteq B_i$ and $B_i$ contains at least two connected components\\
$\Phi = \emptyset$\\
\For {$\omega=1,\dots,c$}{
	create a Q-node $\gamma_{\omega}$\\
	$\phi_{\omega}$ = \textit{Straight\_enumeration}($G[V_{\omega}],\sigma[V_\omega]$) \hfill (if $G[V_{\omega}]$ is a unit int. graph)\\
	\If {$\Phi_{\omega}$ = \textit{Refine}$(\psi[V_{\omega}],\phi_{\omega}) = \emptyset$}
	{
		\If {$\Phi_{\omega}$ = \textit{Refine}$(\psi[V_{\omega}],\overline{\phi}_{\omega}) = \emptyset$}
		{
		\textbf{stop} \hfill (no straight enumeration compatible with $\psi[V_{\omega}]$ exists)
		}
	}
	$a'_{\min}$ is the smallest nonzero entry of $A[V_{\omega}]$\\
	$A'[V_{\omega}]$ is obtained from $A[V_{\omega}]$ by setting entries with value $a'_{\min}$ to zero\\
	\eIf{$A'[V_{\omega}]$ is diagonal}{
		create a PQ-tree $\mathcal{T}_{\omega}$ rooted in $\gamma_{\omega}$ and whose children are P-nodes corresponding to the blocks of $\Phi_{\omega}$			
	}{
		$\mathcal{T}_{\omega} = Robinson(A'[V_{\omega}],\Phi_{\omega},\gamma_{\omega})$
	}
}
$\mathcal{T}_{\alpha}$ is the PQ-tree rooted in $\alpha$, build as follows:\\
$\omega = 1$\\
\While {$\omega \leq c$}{
	\eIf{$\omega \in \Theta$}{
			the children of $\gamma_{\omega}$ become children of $\alpha$ and remove $\gamma_{\omega}$; $\omega=\omega+1$
		}{
		\eIf{$\omega \in \Lambda$}{
				set $\mathcal{T}_{{\omega}}$ as child of $\alpha$ (if $\alpha=\emptyset$, then set $\alpha=\gamma_{\omega}$); $\omega=\omega+1$ \\
		}{
		 	let $\Omega_i$ s.t. $\omega \in \Omega_i$; create a P-node $\beta_i$ and set it as child of $\alpha$ (if $\alpha=\emptyset$, then set $\alpha=\beta_i$)\\
		 	\ForEach {$\omega' \in \Omega_i$}{
		 		set $\gamma_{\omega'}$ as children of  $\beta_i$
		 	}
		 	$\omega= \omega + |\Omega_j|$		 	
		}{}
	}
}
\Return: $\mathcal{T}_{\alpha}$ 
\caption{\textit{\textit{Robinson}}$(A,\psi,\alpha)$}
\label{alg:PQ-tree}
\end{algorithm}
}
\end{center}

\newpage
Finally, we modify Algorithm \ref{alg:Robinsonian_main} by just giving the node $\alpha =\emptyset$ (i.e. undefined) as input to the first recursive call. 
The overall complexity of the algorithm after the above mentioned modifications is the same as for Algorithm \ref{alg:Robinsonian_main}. Indeed, determining the type of the connected components can be done in linear time, by just using the information about the initial and final blocks $B_{\omega}^{\min}$ and $B_{\omega}^{\max}$ already provided in Algorithm \ref{alg:Lex-BFSweak}. Furthermore, the operations on the PQ-tree are basic operations that do not increase the overall complexity of the algorithm.

\begin{algorithm}[!ht]
\SetKwInput {KwIn}{input}
\SetKwInput {KwOut}{output}
\KwIn{a nonnegative  matrix $A \in \mathcal{S}^{n}$}
\KwOut{a PQ-tree $\mathcal{T}$ that encodes all the permutations $\pi$ such that $A_{\pi}$ is a Robinson matrix or stating that $A$ is not Robinsonian}
\vspace{2ex}
$\psi=(V)$\\
$\alpha=\emptyset$\\
$G$ is the support of A\\
$\mathcal{T}$=\textit{Robinson}($A,\psi,\alpha$)\\
\eIf{the number of leaves of $\mathcal{T}$ is equal to $n$}{
		\Return: $\mathcal{T}$	
	}{
		``A is NOT Robinsonian"
}
\caption{\textit{\textit{Robinsonian}}$(A)$}
\label{alg:Robinsonian_mainPQ}
\end{algorithm}

\section{Conclusions}\label{secfinal} 

We introduced a new combinatorial algorithm to recognize Robinsonian matrices, based on a divide-and-conquer strategy and on a new characterization of Robinsonian matrices in terms of straight enumerations of unit interval graphs.
The algorithm is simple, rather intuitive and  relies only on basic routines like Lex-BFS and partition refinement, and it is well suited for sparse matrices.

The complexity depends on the depth $d$ of the recursion tree. An obvious bound on $d$ is the number $L$ of distinct entries in the matrix. A first natural question is to find other better bounds on the depth $d$. Is $d$ in the order $O(n)$, where $n$ is  the size of the matrix?
The answer is no: some computational experiments carried out in \cite{Seminaroti16} show that, for some instances, the depth of the recursion tree is $d = L > n$. This suggests that more sophisticated modifications might be needed to improve the complexity of the algorithm.
A possible way to bound the depth is to find criteria to prune recursion nodes. 
One possibility would be, when a submatrix is found for which the current weak linear order consists only of singletons, to check whether the corresponding permuted matrix is Robinson. 
 Another possible way to improve the complexity might be to compute the straight enumeration of the first level graph and then update it dynamically (in constant time, using a appropriate data structure)  without having to compute every time the whole straight enumeration of the next level graphs; this would need to extend the dynamic approach of \cite{Hell02}, which considers the case of single edge deletions, to the deletion of sets of edges.
Other possible future work includes investigating how  the algorithm could be used to design heuristics or approximation algorithm in the noisy case, when  $A$ is not Robinsonian, for example by using (linear) certifying algorithms as in \cite{Hell05}  to detect the edges and the nodes of the level graphs which create obstructions to being a unit interval graph. 

\section*{Acknowledgements}
This work was supported by the Marie Curie Initial Training Network ``Mixed Integer Nonlinear Optimization" (MINO)  grant no. 316647.


\begin{thebibliography}{10}
\bibitem{Atkins98}
J.E. Atkins, E.G. Boman, and B.~Hendrickson.
\newblock A spectral algorithm for seriation and the consecutive ones problem.
\newblock {\em SIAM Journal on Computing}, 28:297--310, 1998.

\bibitem{Bodlaender99}
H.L. Bodlaender, T.~Kloks, and R.~Niedermeier.
\newblock {SIMPLE MAX-CUT} for unit interval graphs and graphs with few
  {$P4$}s.
\newblock {\em Electronic Notes in Discrete Mathematics}, 3:19--26, 1999.

\bibitem{Booth76}
K.S. Booth and G.S. Lueker.
\newblock Testing for the consecutive ones property, interval graphs, and graph
  planarity using {PQ}-tree algorithms.
\newblock {\em Journal of Computer and System Sciences}, 13(3):335--379, 1976.

\bibitem{Brandstadt97}
A.~Brandst{\"a}dt, F.F. Dragan, and F.~Nicolai.
\newblock Lexbfs-orderings and powers of chordal graphs.
\newblock {\em Discrete Mathematics}, 171(1–3):27--42, 1997.

\bibitem{Chepoi97}
V.~Chepoi and B.~Fichet.
\newblock Recognition of {R}obinsonian dissimilarities.
\newblock {\em Journal of Classification}, 14(2):311--325, 1997.

\bibitem{Chepoi11}
V.~Chepoi and M.~Seston.
\newblock Seriation in the presence of errors: A factor 16 approximation
  algorithm for {$l_\infty$}-fitting {R}obinson structures to distances.
\newblock {\em Algorithmica}, 59(4):521--568, 2011.

\bibitem{Cohen06}
J.~Cohen, F.~Fomin, P.~Heggernes, D.~Kratsch, and G.~Kucherov.
\newblock Optimal linear arrangement of interval graphs.
\newblock In {\em Mathematical Foundations of Computer Science 2006}, volume
  4162 of {\em Lecture Notes in Computer Science}, pages 267--279. Springer
  Berlin Heidelberg, 2006.

\bibitem{Corneil04}
D.G. Corneil.
\newblock A simple 3-sweep {LBFS} algorithm for the recognition of unit
  interval graphs.
\newblock {\em Discrete Applied Mathematics}, 138(3):371--379, 2004.

\bibitem{Corneil95}
D.G. Corneil, H.~Kim, S.~Natarajan, S.~Olariu, and A.P. Sprague.
\newblock Simple linear time recognition of unit interval graphs.
\newblock {\em Information Processing Letters}, 55(2):99--104, 1995.

\bibitem{Corneil98}
D.G. Corneil, S.~Olariu, and L.~Stewart.
\newblock The ultimate interval graph recognition algorithm?
\newblock In {\em Proceedings of the 9th Annual ACM-SIAM Symposium on Discrete
  Algorithms}, SODA '98, pages 175--180. SIAM, 1998.

\bibitem{Habib13}
P.~Crescenzi, D.G. Corneil, J.~Dusart, and M.~Habib.
\newblock New trends for graph search.
\newblock PRIMA Conference in Shanghai, June 2013.
\newblock available at
  \url{http://math.sjtu.edu.cn/conference/Bannai/2013/data/20130629B/slides1.pdf}.

\bibitem{deFigueiredo95}
C.M.H de~Figueiredo, J.Meidanis, and C.P de~Mello.
\newblock A linear-time algorithm for proper interval graph recognition.
\newblock {\em Information Processing Letters}, 56(3):179--184, 1995.

\bibitem{Deng96}
X.~Deng, P.~Hell, and J.~Huang.
\newblock Linear-time representation algorithms for proper circular-arc graphs
  and proper interval graphs.
\newblock {\em SIAM Journal on Computing}, 25(2):390--403, 1996.

\bibitem{Dom09}
M.~Dom.
\newblock Algorithmic aspects of the consecutive-ones property.
\newblock {\em Bulletin of the European Association for Theoretical Computer
  Science}, 98:27–--59, 2009.

\bibitem{Fogel13}
F.~Fogel, R.~Jenatton, F.~Bach, and A.~d'Aspremont.
\newblock Convex relaxations for permutation problems.
\newblock In {\em Advances in Neural Information Processing Systems}, pages
  1016--1024, 2013.

\bibitem{Fulkerson65}
D.R. Fulkerson and O.A. Gross.
\newblock Incidence matrices and interval graphs.
\newblock {\em Pacific Journal of Mathematics}, 15(3):835--855, 1965.

\bibitem{Gilmore64}
P.C. Gilmore and A.J. Hoffman.
\newblock A characterization of comparability graphs and of interval graphs.
\newblock {\em Canadian Journal of Mathematics}, 16:539--548, 1964.

\bibitem{Golumbic04}
M.C. Golumbic.
\newblock {\em Algorithmic graph theory and perfect graphs (Annals of Discrete
  Mathematics, Vol 57)}.
\newblock North-Holland Publishing Co., Amsterdam, The Netherlands, 2004.

\bibitem{Habib00}
M.~Habib, R.~McConnell, C.~Paul, and L.~Viennot.
\newblock {Lex-BFS} and partition refinement, with applications to transitive
  orientation, interval graph recognition and consecutive ones testing.
\newblock {\em Theoretical Computer Science}, 234(1–2):59--84, 2000.

\bibitem{Hell05}
P.~Hell and J.~Huang.
\newblock Certifying {LexBFS} recognition algorithms for proper interval graphs
  and proper interval bigraphs.
\newblock {\em SIAM Journal on Discrete Mathematics}, 18(3):554--570, 2005.

\bibitem{Hell02}
P.~Hell, R.~Shamir, and R.~Sharan.
\newblock A fully dynamic algorithm for recognizing and representing proper
  interval graphs.
\newblock {\em SIAM Journal on Computing}, 31(1):289--305, 2002.

\bibitem{BangJensen07}
J.Bang-Jensen, J.~Huang, and L.~Ibarra.
\newblock Recognizing and representing proper interval graphs in parallel using
  merging and sorting.
\newblock {\em Discrete Applied Mathematics}, 155(4):442--456, 2007.

\bibitem{Laurent15}
M.~Laurent and M.~Seminaroti.
\newblock A {Lex-BFS}-based recognition algorithm for {R}obinsonian matrices.
\newblock In {\em Algorithms and Complexity: Proceedings of the 9th
  International Conference CIAC 2015}, volume 9079 of {\em Lecture Notes in
  Computer Science}, pages 325--338. Springer-Verlag, 2015.

\bibitem{Laurent14}
M.~Laurent and M.~Seminaroti.
\newblock The quadratic assignment problem is easy for {R}obinsonian matrices
  with {T}oeplitz structure.
\newblock {\em Operations Research Letters}, 43(1):103--109, 2015.

\bibitem{Innar10}
I.~Liiv.
\newblock Seriation and matrix reordering methods: An historical overview.
\newblock {\em Statistical Analysis and Data Mining}, 3(2):70--91, 2010.

\bibitem{Looges93}
P.J. Looges and S.~Olariu.
\newblock Optimal greedy algorithms for indifference graphs.
\newblock {\em Computers \& Mathematics with Applications}, 25(7):15--25, 1993.

\bibitem{Mahesh91}
R.~Mahesh, C.P. Rangan, and A.~Srinivasan.
\newblock On finding the minimum bandwidth of interval graphs.
\newblock {\em Information and Computation}, 95(2):218--224, 1991.

\bibitem{Mirkin84}
B.G. Mirkin and S.N. Rodin.
\newblock {\em Graphs and genes}.
\newblock Biomathematics (Springer-Verlag). Springer, 1984.

\bibitem{Olariu91}
S.~Olariu.
\newblock An optimal greedy heuristic to color interval graphs.
\newblock {\em Information Processing Letters}, 37(1):21--25, 1991.

\bibitem{Prea14}
P.~Pr\'ea and D.~Fortin.
\newblock An optimal algorithm to recognize {R}obinsonian dissimilarities.
\newblock {\em Journal of Classification}, 31:1--35, 2014.

\bibitem{Roberts69}
F.S. Roberts.
\newblock Indifference graphs.
\newblock pages 139--–146. Academic Press, 1969.

\bibitem{Roberts71}
F.S. Roberts.
\newblock On the compatibility between a graph and a simple order.
\newblock {\em Journal of Combinatorial Theory, Series B}, 11(1):28--38, 1971.

\bibitem{Roberts78}
F.S. Roberts.
\newblock {\em Graph theory and its applications to problems of society}.
\newblock Society for Industrial and Applied Mathematics, 1978.

\bibitem{Robinson51}
W.S. Robinson.
\newblock A method for chronologically ordering archaeological deposits.
\newblock {\em American Antiquity}, 16(4):293--301, 1951.

\bibitem{Rose75}
D.J Rose and R.E. Tarjan.
\newblock Algorithmic aspects of vertex elimination.
\newblock In {\em Proceedings of 7th Annual ACM Symposium on Theory of
  Computing}, STOC '75, pages 245--254. ACM, 1975.

\bibitem{Seminaroti16}
M.~Seminaroti.
\newblock {\em Combinatorial {A}lgorithms for the {Seriation} {P}roblem}.
\newblock PhD thesis, Tilburg University, 2016.

\bibitem{Seston08}
M.~Seston.
\newblock {\em Dissimilarit\'es de {R}obinson: algorithmes de reconnaissance et
  d'approximation}.
\newblock PhD thesis, Universit\'e de la M\'editerran\'ee, 2008.

\bibitem{Simon91}
K.~Simon.
\newblock A new simple linear algorithm to recognize interval graphs.
\newblock In H.~Bieri and H.~Noltemeier, editors, {\em Computational
  geometry-methods, algorithms and applications}, volume 553 of {\em Lecture
  Notes in Computer Science}, pages 289--308. Springer Berlin Heidelberg, 1991.
  
  

\end{thebibliography}
\bibliographystyle{plain}


\newpage

\appendix

\section{Example}\label{sec:appendix3}

We give a complete example of the algorithm.
We consider the  same matrix  $A$ as the one used in the example in Section 5 of \cite{Prea14}.
However, since \cite{Prea14} handles Robinsonian dissimilarities, we  first transform it into a similarity matrix and thus we use instead the matrix  $a_{\max} J - A$, where $a_{\max}$ denotes the largest entry in the matrix $A$.
If we rename such a new matrix as $A$, it looks as follows:

\scriptsize
\begin{equation*}
A =
\bordermatrix{
~ & \textcolor{red}{1} & \textcolor{red}{2} & \textcolor{red}{3} & \textcolor{red}{4} & \textcolor{red}{5} & \textcolor{red}{6} & \textcolor{red}{7} & \textcolor{red}{8} & \textcolor{red}{9} & \textcolor{red}{10}& \textcolor{red}{11} & \textcolor{red}{12} & \textcolor{red}{13} & \textcolor{red}{14} & \textcolor{red}{15} & \textcolor{red}{16} & \textcolor{red}{17} & \textcolor{red}{18} & \textcolor{red}{19}\cr
\textcolor{red}{1}  & 11 & 2 & 9 & 0 & 5 & 0 & 5 & 5 & 2 & 0 & 5 & 0 & 5 & 6 & 0 & 0 & 2 & 0 & 5\cr
\textcolor{red}{2}  & & 11 & 2 & 0 & 9 & 0 & 8 & 5 & 10 & 0 & 5 & 0 & 5 & 2 & 0 & 0 & 10 & 0 & 8\cr
\textcolor{red}{3}  & & & 11 & 0 & 5 & 0 & 5 & 5 & 2 & 0 & 5 & 0 & 5 & 10 & 0 & 0 & 2 & 0 & 5\cr
\textcolor{red}{4}  & & & & 11 & 0 & 3 & 0 & 0 & 0 & 3 & 0 & 3 & 0 & 0 & 10 & 3 & 0 & 9 & 0\cr
\textcolor{red}{5}  & & & & & 11 & 0 & 8 & 7 & 9 & 0 & 7 & 0 & 7 & 5 & 0 & 0 & 9 & 0 & 10\cr
\textcolor{red}{6}  & & & & & & 11 & 0 & 0 & 0 & 10 & 0 & 6 & 0 & 0 & 5 & 8 & 0 & 5 & 0\cr
\textcolor{red}{7}  & & & & & & & 11 & 7 & 8 & 0 & 7 & 0 & 7 & 5 & 0 & 0 & 8 & 0 & 9\cr
\textcolor{red}{8}  & & & & & & & & 11 & 6 & 0 & 10 & 0 & 8 & 7 & 0 & 0 & 6 & 0 & 7\cr
\textcolor{red}{9}  & & & & & & & & & 11 & 0 & 6 & 0 & 5 & 2 & 0 & 0 & 10 & 0 & 8\cr
\textcolor{red}{10} & & & & & & & & & & 11 & 0 & 6 & 0 & 0 & 4 & 9 & 0 & 5 & 0\cr
\textcolor{red}{11} & & & & & & & & & & & 11 & 0 & 9 & 7 & 0 & 0 & 6 & 0 & 7\cr
\textcolor{red}{12} & & & & & & & & & & & & 11 & 0 & 0 & 9 & 6 & 0 & 10 & 0\cr
\textcolor{red}{13} & & & & & & & & & & & & & 11 & 7 & 0 & 0 & 5 & 0 & 7\cr
\textcolor{red}{14} & & & & & & & & & & & & & & 11 & 0 & 0 & 2 & 0 & 5\cr
\textcolor{red}{15} & & & & & & & & & & & & & & & 11 & 4 & 0 & 10 & 0\cr
\textcolor{red}{16} & & & & & & & & & & & & & & & & 11 & 0 & 4 & 0\cr
\textcolor{red}{17} & & & & & & & & & & & & & & & & & 11 & 0 & 8\cr
\textcolor{red}{18} & & & & & & & & & & & & & & & & & & 11 & 0\cr
\textcolor{red}{19} & & & & & & & & & & & & & & & & & & & 11\cr
}
\end{equation*}
\normalsize
Here the red labels denote the original numbering of the elements. Throughout we will use the fact that adding any multiple of the all-ones matrix $J$ to the matrix $A$ does not change the Robinson(ian) property. 
The recursion tree computed by Algorithm \ref{alg:Robinsonian_main} is shown in Figure \ref{fig:Recursion tree} at page \pageref{fig:Recursion tree}.
The weak linear order at each node represents the weak linear order $\psi$ given as input to the recursion node, while the number on the edge between two nodes denotes the minimum value in the current matrix $A$, which is  set to zero before making a new recursion call (in this way, the reader may reconstruct the input given at each recursion node).

\begin{sidewaysfigure}[!ht]
\centering
\includegraphics[width=\linewidth]{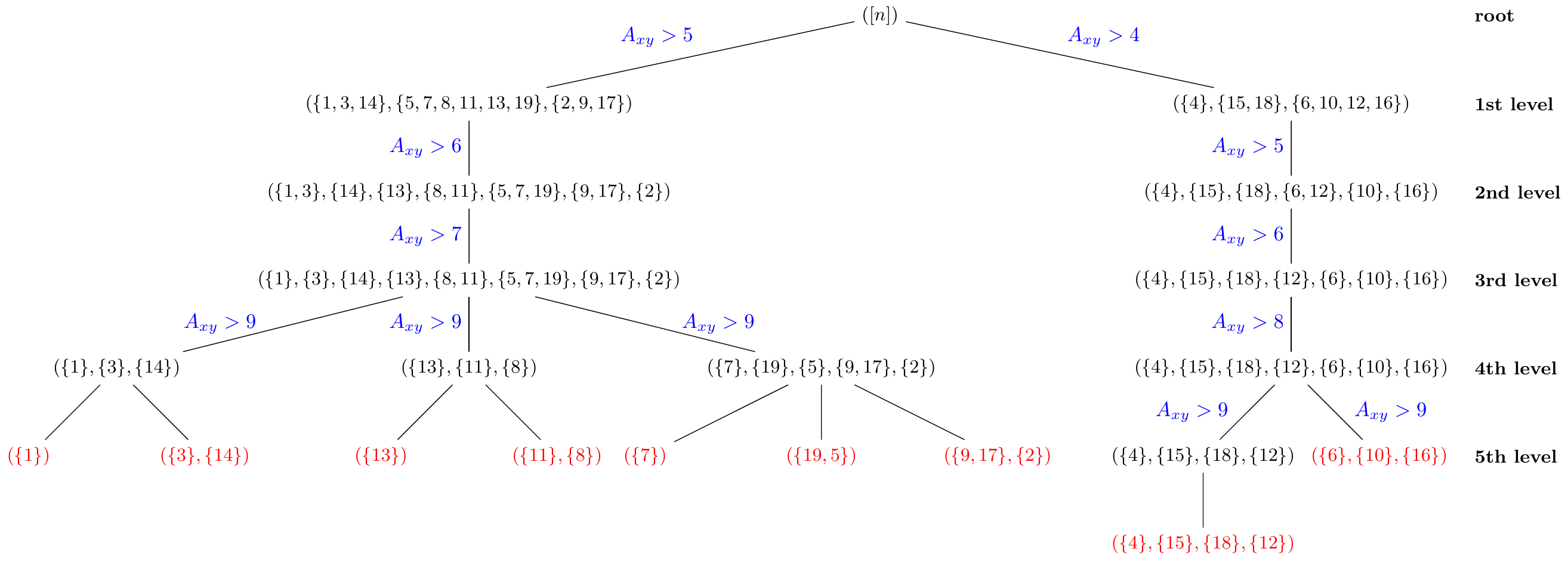}
\caption{Recursion tree. The weak linear order at each node represent the weak linear order $\psi$ given as input to the recursion node, while the number on the edge between two nodes denotes the minimum value in $A$ set to zero before making a new recursion call. The red weak linear orders are the pruned nodes.}
\label{fig:Recursion tree}
\end{sidewaysfigure}

\medskip \noindent
\textbf{Root node}\\
We set $\psi =(V)$ and invoke Algorithm \ref{alg:Robinsonian recognition}.
Then, Algorithm \ref{alg:Lex-BFSweak} would find two connected components:
\begin{align*}
& V_{1} = \{1,2,3,5,7,8,9,11,13,14,17,19\}, &\\
& V_{2} = \{4,6,10,12,15,16,18\}. &
\end{align*}
\noindent
Hence, we can split the problem into two subproblems, where we deal with each connected component independently.

\medskip \noindent
\textbf{1.0 Connected component $V_1$, level 0}\\
The submatrix $A[V_1]$ induced by $V_1$ is shown below (after shifting the matrix by $a_{\min}=2$, i.e., substracting $2J$ from it). 

\begin{equation}\label{eq:A_1}
A[V_1]=
\bordermatrix{
~ & \textcolor{red}{1} & \textcolor{red}{2} & \textcolor{red}{3} & \textcolor{red}{5} & \textcolor{red}{7} & \textcolor{red}{8} & \textcolor{red}{9} & \textcolor{red}{11} & \textcolor{red}{13} & \textcolor{red}{14}& \textcolor{red}{17} & \textcolor{red}{19}\cr
\textcolor{red}{1}  & 9 & 0 & 7 & 3 & 3 & 3 & 0 & 3 & 3 & 4 & 0 & 3 \cr
\textcolor{red}{2}  & & 9 & 0 & 7 & 6 & 3 & 8 & 3 & 3 & 0 & 8 & 6 \cr
\textcolor{red}{3}  & & & 9 & 3 & 3 & 3 & 0 & 3 & 3 & 8 & 0 & 3 \cr
\textcolor{red}{5}  & & & & 9 & 6 & 5 & 7 & 5 & 5 & 3 & 7 & 8 \cr
\textcolor{red}{7}  & & & & & 9 & 5 & 6 & 5 & 5 & 3 & 6 & 7 \cr
\textcolor{red}{8}  & & & & & & 9 & 4 & 8 & 6 & 5 & 4 & 5 \cr
\textcolor{red}{9}  & & & & & & & 9 & 4 & 3 & 0 & 8 & 6 \cr
\textcolor{red}{11} & & & & & & & & 9 & 7 & 5 & 4 & 5 \cr
\textcolor{red}{13} & & & & & & & & & 9 & 5 & 3 & 5 \cr
\textcolor{red}{14} & & & & & & & & & & 9 & 0 & 3 \cr
\textcolor{red}{17} & & & & & & & & & & & 9 & 6 \cr
\textcolor{red}{19} & & & & & & & & & & & & 9 \cr
}
\end{equation}

\noindent
If we invoke Algorithm \ref{alg:straght_enumeration} on the support $G[V_1]$, we get the following straight enumeration:
\begin{equation*}
\phi = (\{1,3,14\},\{5,7,8,11,13,19\},\{2,9,17\}).
\end{equation*}
Note that since $\psi[V_1]$ has only one block, we do not need to compute the partition refinement in Algorithm \ref{alg:refine} and then the common refinement is simply $\Phi=\phi$.
The smallest nonzero value of the matrix in (\ref{eq:A_1}) is $a'_{\min} = 3$. Hence, we compute $A'[V_1]$ by setting to zero the entries of the matrix in (\ref{eq:A_1}) with value equal to $a'_{\min}$.
Since $A'[V_1]$ is not diagonal, we make a recursion call, and we set $\psi=\Phi$. To simplify notation, we shall rename $A'[V_1]$ as $A[V_1]$ after every iteration.

\medskip
\noindent
\textbf{1.1 Connected component $V_1$, level 1}\\
The input matrix $A[V_1]$ is obtained by setting to zero the entries of the matrix in (\ref{eq:A_1}) with value at most 3. The support of this matrix is still connected, and its straight enumeration is:
\begin{equation*}
\phi = (\{1,3\},\{14\},\{13\},\{8,11\},\{5,7,19\},\{9,17\},\{2\}).
\end{equation*}
If we invoke Algorithm \ref{alg:refine}, it is easy to see that the common refinement $\Phi$ of $\psi$ and $\phi$ is exactly $\phi$.
The smallest nonzero value of $A[V_1]$ is now $a'_{\min} = 4$. Hence, we compute $A'[V_1]$, which is not diagonal and thus we make a recursion call, setting $\psi=\Phi$ and renaming $A'[V_1]$ as $A[V_1]$.

\medskip
\noindent
\textbf{1.2 Connected component $V_1$, level 2}\\
The input matrix $A[V_1]$ is obtained by setting to zero the entries of the matrix in (\ref{eq:A_1}) with value at most 4.
The support of this matrix is still connected, and its straight enumeration is:
\begin{equation*}
\phi = (\{1\},\{3\},\{14\},\{13,8,11\},\{5,7,19\},\{9,17,2\}).
\end{equation*}
The common refinement with $\psi$ is then given by:
\begin{equation*}
\Phi = (\{1\},\{3\},\{14\},\{13\},\{8,11\},\{5,7,19\},\{9,17\},\{2\}).
\end{equation*}
The smallest nonzero value of $A[V_1]$ is now $a'_{\min} = 5$. Hence, we compute $A'[V_1]$, which is not diagonal and thus we make a recursion call, setting $\psi=\Phi$ and renaming $A'[V_1]$ as $A[V_1]$.

\medskip
\noindent
\textbf{1.3 Connected component $V_1$, level 3}\\
The input matrix $A[V_1]$ is obtained by setting to zero the entries of the matrix in (\ref{eq:A_1}) with value at most 5.
The support of this matrix is not connected, and thus Algorithm \ref{alg:Lex-BFSweak} will detect the following connected components:
\begin{align*}
& V_{11} = \{1,3,14\}, &\\
& V_{12} = \{13,8,11\}, &\\
& V_{13} = \{5,7,19,9,17,2\}. &
\end{align*}
We analyze each connected component independently.

\medskip
\noindent
\textbf{1.3.1 Connected component $V_{11}$, level 4}\\
The common refinement is $\Phi = (\{1\},\{3\},\{14\})$ and $a'_{\min}=7$. We make a recursion call, finding two connected components $\{1\}$ and $\{3,14\}$, and then we stop, because the first one has only one vertex, while the submatrix corresponding to the second one is diagonal (after shifting). 

\medskip
\noindent
\textbf{1.3.2 Connected component $V_{12}$, level 4}\\
Since $a_{\min}=6$, we first ``shift" the input submatrix. Then, the common refinement is $\Phi = (\{13\},\{11\},\{8\})$ and $a'_{\min}=7$ (i.e. =1 after shifting). We make a recursion call, finding two connected components $\{13\}$ and $\{11,8\}$, and then we stop, because the first one has only one vertex, while the submatrix corresponding to the second one is diagonal (after shifting). 

\medskip
\noindent
\textbf{1.3.3 Connected component $V_{13}$, level 4}\\
Since $a_{\min}=6$, we first ``shift" the input submatrix. Then, the common refinement is $\Phi = (\{7\},\{19\},\{5\},\{9,17\},\{2\})$ and $a'_{\min}=7$ (i.e. =1 after shifting).
We update $A'[V_1]$, and we make a recursive call because it is not diagonal.
The new input matrix is then given by the submatrix in (\ref{eq:A_1}) restricted to $V_{13}$ by setting to zero the entries with value at most 7.
The support of this  matrix is not connected, and thus Algorithm \ref{alg:Lex-BFSweak} will detect the following connected components:
\begin{align*}
& V_{131} = \{7\}, &\\
& V_{132} = \{19,5\}, &\\
& V_{133} = \{9,17,2\}. &
\end{align*}
\noindent
We then split the problem over the connected components. The first one has only one vertex while the second and the third one are diagonal (after shifting). 
This was the last recursion node open. 

\medskip\noindent
Therefore, we get that the final common refinement of the level graphs of the matrix in (\ref{eq:A_1}) is:
\begin{equation*}
\Phi_{1} = (\{1\},\{3\},\{14\},\{13\},\{11\},\{8\},\{7\},\{19\},\{5\},\{9,17\},\{2\})
\end{equation*}
and the PQ-tree $\mathcal{T}_{1}$ computed by the algorithm is reported in Figure \ref{fig:PQ_example_1_7} at 
page~\pageref{fig:PQ_example_1_7}.
\begin{figure}[!h]
\centering
  	\includegraphics[width=.7\linewidth]{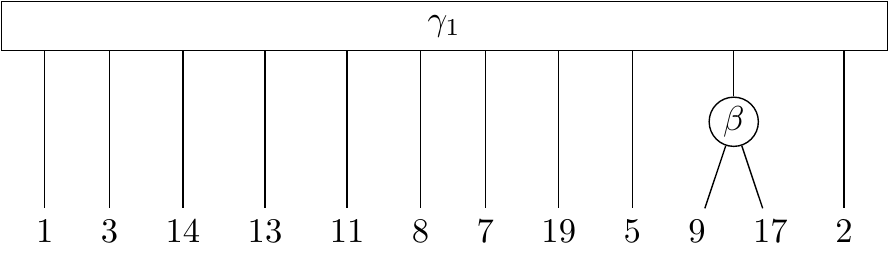}
\caption{The PQ-tree corresponding to the common refinement of level graphs of $A[V_1]$}
\label{fig:PQ_example_1_7}
\end{figure}

\medskip \noindent
\textbf{2.0 Connected component $V_2$, level 0}\\
The submatrix $A[V_2]$ induced by $V_2$ is reported below (after shifting the matrix by $a_{\min}=3$). 

\begin{equation}\label{eq:A_2}
A[V_2]=
\bordermatrix{
~ & \textcolor{red}{4} & \textcolor{red}{6} & \textcolor{red}{10} & \textcolor{red}{12} & \textcolor{red}{15} & \textcolor{red}{16} & \textcolor{red}{18} \cr
\textcolor{red}{4}  & 8 & 0 & 0 & 0 & 7 & 0 & 6 \cr
\textcolor{red}{6}  & & 8 & 7 & 3 & 2 & 5 & 2 \cr
\textcolor{red}{10}  & & & 8 & 3 & 1 & 6 & 2 \cr
\textcolor{red}{12}  & & & & 8 & 6 & 3 & 7 \cr
\textcolor{red}{15}  & & & & & 8 & 1 & 7 \cr
\textcolor{red}{16}  & & & & & & 8 & 1 \cr
\textcolor{red}{18}  & & & & & & & 8 \cr
}
\end{equation}

\noindent
If we invoke Algorithm \ref{alg:straght_enumeration} on the support $G[V_2]$, we get the following straight enumeration:
\begin{equation*}
\phi =(\{4\},\{15,18\},\{6,10,12,16\}).
\end{equation*}
Note that since $\psi[V_{2}]$ has only one block, we do not have to compute the partition refinement in Algorithm \ref{alg:refine}, and then the common refinement is simply $\Phi=\phi$.
The smallest nonzero value of the matrix in (\ref{eq:A_2}) is $a'_{\min} = 1$. Hence, we compute $A'[V_2]$ by setting to zero the entries of the matrix in (\ref{eq:A_2}) with value equal to $a'_{\min}$.
Since $A'[V_2]$ is not diagonal, we make a recursion call, and we set $\psi=\Phi$. To simplify notation, we shall again rename $A'[V_2]$ as $A[V_2]$ after every iteration.

\medskip
\noindent
\textbf{2.1 Connected component $V_2$, level 1}\\
The input matrix $A[V_2]$ is obtained by setting to zero the entries of the matrix in (\ref{eq:A_2}) with value at most 1.
The support of this matrix is still connected, and its straight enumeration is:
\begin{equation*}
\phi=(\{4\},\{15\},\{18\},\{6,12\},\{10\},\{16\}).
\end{equation*}
If we invoke Algorithm \ref{alg:refine}, it is easy to see that the common refinement is $\Phi = \phi$
The smallest nonzero value of $A[V_2]$ is now $a'_{\min} = 2$. Hence, we compute $A'[V_2]$, which is not diagonal and thus we make a recursion call, setting $\psi=\Phi$ and renaming $A'[V_2]$ as $A[V_2]$ for the next iteration.

\medskip
\noindent
\textbf{2.2 Connected component $V_2$, level 2}\\
The input matrix $A[V_2]$ is obtained by setting to zero the entries of the matrix in (\ref{eq:A_2}) with value at most 2.
The support of this matrix is still connected, and its straight enumeration is:
\begin{equation*}
\phi=(\{4,15\},\{18\},\{12\},\{6,10,16\}).
\end{equation*}
The common refinement is then simply:
\begin{equation*}
\Phi = (\{4\},\{15\},\{18\},\{12\},\{6\},\{10\},\{16\}).
\end{equation*}
The smallest nonzero value of $A[V_2]$ is now $a'_{\min} = 3$. Hence, we compute $A'[V_2]$, which is not diagonal and thus we make a recursion call, setting $\psi=\Phi$ and renaming $A'[V_2]$ as $A[V_2]$ for the next iteration.

\medskip
\noindent
\textbf{2.3 Connected component $V_2$, level 3}\\
The input matrix $A[V_2]$ is obtained by setting to zero the entries of the matrix in (\ref{eq:A_2}) with value at most 3.
The support of this matrix is still connected, and its straight enumeration is:
\begin{equation*}
\phi=(\{4\},\{15\},\{18\},\{12\},\{6\},\{10\},\{16\})
\end{equation*}
which is equal to the given $\psi$ (and thus will be the common refinement $\Phi$).
The smallest nonzero value of $A[V_2]$ is now $a'_{\min} = 5$. Hence, we compute $A'[V_2]$, which is not diagonal and thus we make a recursion call, setting $\psi=\Phi$ and renaming $A'[V_2]$ as $A[V_2]$ for the next iteration.

\medskip
\noindent
\textbf{2.4 Connected component $V_2$, level 4}\\
The input matrix $A[V_2]$ is obtained by setting to zero the entries of the matrix in (\ref{eq:A_2}) with value at most 5. 
The support of this matrix is not connected, and thus Algorithm \ref{alg:Lex-BFSweak} will detect two connected components.
\begin{align*}
& V_{21} = \{4,15,18,12\}, &\\
& V_{22} = \{6,10,16\}. &
\end{align*}
We then split the problem over the connected components.

\medskip
\noindent
\textbf{2.4.1 Connected component $V_{21}$, level 5}\\
The common refinement is $\Phi = (\{4\},\{15\},\{18\},\{12\})$ and $a'_{\min}=6$. We then make a recursion call and
we find again a connected graph. Again the common refinement does not change (in fact the blocks are singletons) and now $a'_{\min}=7$. Finally, the submatrix $A'[V_{21}]$ is diagonal, so we prune the node.

\medskip
\noindent
\textbf{2.4.2 Connected component $V_{22}$, level 5}\\
The common refinement is $\Phi = (\{6\},\{10\},\{16\})$ (again the blocks are singletons) and $a'_{\min}=6$. But now $A'[V_{22}]$ is diagonal, so we prune the node.
This was the last recursion node open.

\medskip\noindent
Therefore, we get that the final common refinement of level graphs of $A[V_2]$ is:
\begin{equation*}
\Phi_2 = (\{4\},\{15\},\{18\},\{12\},\{6\},\{10\},\{16\})
\end{equation*}
and the PQ-tree $\mathcal{T}_{2}$ computed by the algorithm is shown in Figure \ref{fig:PQ_example_2_3} at page \pageref{fig:PQ_example_2_3}.
\begin{figure}[!h]
\centering
  	\includegraphics[width=.4\linewidth]{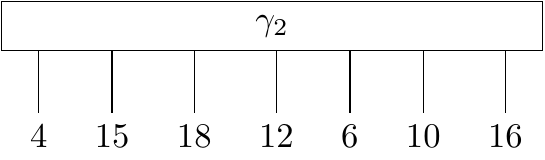}
\caption{The PQ-tree corresponding to the common refinement of level graphs of $A[V_2]$}
\label{fig:PQ_example_2_3}
\end{figure}
Finally, we can build the PQ-tree representing the permutation reordering $A$ as a Robinson matrix. Since both $V_1$ and $V_2$ are contained in the same block of $\psi$ (which at the beginning is $\psi=([n])$), then we create a P-node (named $\alpha$ since it is the ancestor) whose children are the subtrees $\mathcal{T}_{1}$ and $\mathcal{T}_{2}$.
The final PQ-tree is shown in Figure \ref{fig:PQ_example_3_final} at page \pageref{fig:PQ_example_3_final}, and is equivalent to the one returned by \cite{Prea14}.

\begin{figure}[!h]
\centering
  	\includegraphics[width=\linewidth]{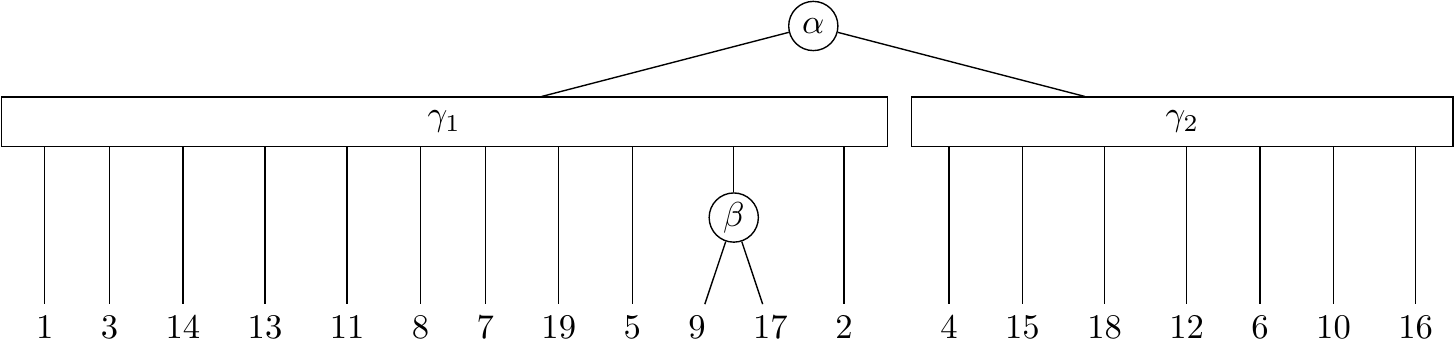}
\caption{The PQ-tree corresponding to the permutations reordering $A$ as a Robinson matrix}
\label{fig:PQ_example_3_final}
\end{figure}

Note that, using the fact that a common refinement which is a linear order cannot be refined anymore, the depth could be lowered to $d=4$ (since the right branch would have depth $d=3$).
Understanding this and other possible speed up  will be the subject of future work.

\end{document}